\newcommand{\qw}[1][-1]{\ar @{-} [0,#1]}
\newcommand{\gate}[1]{*+<.6em>{#1} \POS ="i","i"+UR;"i"+UL **\dir{-};"i"+DL **\dir{-};"i"+DR **\dir{-};"i"+UR **\dir{-},"i" \qw}
\newcommand{\meter}{*=<1.8em,1.4em>{\xy ="j","j"-<.778em,.322em>;{"j"+<.778em,-.322em> \ellipse ur,_{}},"j"-<0em,.4em>;p+<.5em,.9em> **\dir{-},"j"+<2.2em,2.2em>*{},"j"-<2.2em,2.2em>*{} \endxy} \POS ="i","i"+UR;"i"+UL **\dir{-};"i"+DL **\dir{-};"i"+DR **\dir{-};"i"+UR **\dir{-},"i" \qw}
\newcommand{\multigate}[2]{*+<1em,.9em>{\hphantom{#2}} \POS [0,0]="i",[0,0].[#1,0]="e",!C *{#2},"e"+UR;"e"+UL **\dir{-};"e"+DL **\dir{-};"e"+DR **\dir{-};"e"+UR **\dir{-},"i" \qw}
\newcommand{\ghost}[1]{*+<1em,.9em>{\hphantom{#1}} \qw}
\newcommand{\gategroup}[6]{\POS"#1,#2"."#3,#2"."#1,#4"."#3,#4"!C*+<#5>\frm{#6}}
\newcommand{\lstick}[1]{*!R!<.5em,0em>=<0em>{#1}}
\newcommand{\Qcircuit}{\xymatrix @*=<0em>}
\newtheorem{thm}{Theorem}
\newtheorem{mydef}{Definition}
\newtheorem{mytheorem}{Theorem}
\newtheorem{mylemma}{Lemma}
\newcommand{\bes} {\begin{subequations}}
\newcommand{\ees} {\end{subequations}}
\newcommand{\bea} {\begin{eqnarray}}
\newcommand{\eea} {\end{eqnarray}}
\newcommand{\C}{\ensuremath{\mathbb{C}}} 
\newcommand{\ii}{\ensuremath{{i}}}
\newcommand{\abs}[1]{\ensuremath{\left|#1\right|}} 
\newcommand{\norm}[1]{\ensuremath{\left\|#1\right\|}} 
\newcommand{\opU}{\ensuremath{{{U}}}}
\newcommand{\opH}{\ensuremath{{{H}}}}
\newcommand{\beq}{\begin{equation}}
\newcommand{\dfsh}{\tilde{\mathcal{H}}}
\newcommand{\eeq}{\end{equation}}
\newcommand{\ignore}[1]{}
\newcommand{\mc}[1]{\mathcal{#1}}
\def\a{\alpha}
\def\b{\beta}
\def\d{\delta}
\def\r{\rho}
\def\s{\sigma}
\def\ps{\psi}
\def\o{\omega}
\def\ox{\otimes}
\def\>{\rangle}
\def\<{\langle}
\def\Tr{\mathrm{Tr}}
\newcommand{\ketb}[2]{|{#1}\>\<#2|}
\newcommand{\opI}{\ensuremath{{{I}}}}
\newcommand{\opA}{\ensuremath{{{A}}}}
\newcommand{\opB}{\ensuremath{{{B}}}}
\newcommand{\opX}{\ensuremath{{{X}}}}
\newcommand{\opY}{\ensuremath{{{Y}}}}
\newcommand{\opZ}{\ensuremath{{{Z}}}}
\newcommand{\ee}{\ensuremath{{e}}} 
\def\lp{\left(}
\def\rp{\right)}
\def\ls{\left[}
\def\rs{\right]}
\def\lb{\left\{}
\def\rb{\right\}}
\def\dgr{\dagger}
\begin{document}

\title{Review of Decoherence Free Subspaces, Noiseless Subsystems, and Dynamical Decoupling}
\author{Daniel A. Lidar}
\affiliation{Departments of Electrical Engineering, Chemistry, and Physics, and Center
for Quantum Information Science \& Technology, University of Southern
California, Los Angeles, California 90089, USA}
\begin{abstract}
Quantum information requires protection from the adverse affects of decoherence and noise. This review provides an introduction to the theory of decoherence-free subspaces, noiseless subsystems, and dynamical decoupling. It addresses quantum information preservation as well 
as 
protected computation.
\end{abstract}

\maketitle

\section{Introduction}
The protection of quantum information is a central task in quantum information processing \cite{nielsen2000quantum}. Decoherence and noise are obstacles which must be overcome and managed before large scale quantum computers can be built. This review provides an introduction to the theory of decoherence-free subspaces, noiseless subsystems, and dynamical decoupling, among the key tools in the arsenal of decoherence mitigation strategies. It is based on lectures given by the author at the University of Southern California as part of a graduate course on quantum error correction, and as such is not meant to be a comprehensive review, nor to supply an exhaustive list of references. Rather, the goal is to get the reader quickly up to speed on a subset of key topics in the field of quantum noise avoidance and suppression. For previous reviews overlapping with some of the theoretical topics covered here see, e.g., Refs.~\cite{Lidar:2003fk,Yang:2011:2,Lidar-Brun:book}.

The review is structured as follows. Section~\ref{sec:DFS} introduces decoherence-free subspaces (DFSs). Section~\ref{sec:CD} defines and analyzes the collective dephasing model, and explains how to combine the corresponding DFS encoding with universal quantum computation. Section~\ref{sec:CDec} considers the same problem in the context of the more general collective decoherence model, where general noise afflicts all qubits simultaneously. Section~\ref{noiseless} introduces and analyzes noiseless subsystems (NSs), a key generalization of DFSs which underlies all known methods of quantum information protection. The NS structure is illustrated with the three-qubit code against collective decoherence, including computation over this code. We then proceed to dynamical decoupling (DD). Section~\ref{sec:DD} introduces the topic by analyzing the protection of a single qubit against pure dephasing and against general decoherence, using both ideal (zero-width) and real (finite-width) pulses.  Section~\ref{sec:DD-symm} briefly discusses DD as a symmetrization procedure. Section \ref{sec:DD-DFS} discusses combining DD with DFS in the case of two qubits. Section~\ref{CDD} addresses concatenated dynamical decoupling (CDD), a method to achieve high-order decoupling. In the final technical Section~\ref{sec:DD-rep}, we come full circle and connect dynamical decoupling to the representation theory ideas underlying noiseless subsystems theory, thus presenting a unified view of the the approaches. Concluding remarks and additional literature entries are presented in Section~\ref{sec:conc}.

\section{Decoherence-Free Subspaces}
\label{sec:DFS}

Let us begin by assuming that we have two systems: $S$ and $B$, defined by the Hilbert spaces $\mathcal{H}_S$ and $\mathcal{H}_B$, respectively. In general, the dynamics of these two systems are generated by
\beq
H=H_S+H_B+H_{SB},
\eeq
where $H_S$ and $H_B$ are the Hamiltonians corresponding to the pure dynamics of systems $S$ and $B$, respectively, and $H_{SB}$ is the interaction between the two systems. Using the Kraus operator sum representation (OSR) \cite{nielsen2000quantum}, we can effectively study the reduced dynamics of $S$ for an initial state $\rho_S(0)$, where
\beq
\rho_S(0)\mapsto \rho_S(t)=\sum_\alpha K_\alpha(t) \rho_S(0) K^{\dagger}_\alpha(t)
\label{eq:rhomap}
\eeq
after the partial trace over system $B$ is completed. The Kraus operators $K_\alpha(t)$ satisfy the relation $\sum_\alpha K^{\dagger}_\alpha(t)K_\alpha(t)=I_S$ $\forall t$, where $I_S$ is the identity operator on the system, $S$. The OSR generally results in non-unitary evolution in the system Hilbert space. Therefore, let us define decoherence as follows:

\begin{mydef}
An open system undergoes decoherence if its evolution is not unitary. Conversely, an open system which undergoes purely unitary evolution (possibly only in a subspace of its Hilbert space) is said to be decoherence-free.
\end{mydef}

We would now like to study how we can avoid decoherence.

\subsection{A Classical Example}
Let us begin with a simple classical example.  Assume we have three parties: Alice, Bob, and Eve.  Alice wants to send a message to Bob, and Evil Eve (the Environment) wants to mess that message up.  Let's also assume that the only way in which Eve can act to mess up the message is by, with some probability, flipping all of the bits of the message.  If Alice were to send only one bit to Bob there would be no way of knowing if that bit had been flipped.  But let's say Alice is smarter than Eve and decides to send two bits.  She also communicates with Bob beforehand and tells him that if he receives a 00 or 11 he should treat it as a ``logical" 0 ($\bar{0}$), and if he receives a 01 or 10 to treat it as a ``logical" 1 ($\bar{1}$).  If this scheme is used, Eve's ability to flip both bits has no effect on their ability to communicate:
\bes
\bea
\bar{0} = \begin{cases}
00\\11
\end{cases} \mapsto\ \begin{cases}
11\\00
\end{cases} = \bar{0} \\
\bar{1} = \begin{cases}
01\\10
\end{cases} \mapsto\ \begin{cases}
10\\01
\end{cases} = \bar{1}
\eea
\ees
In this example we use parity conservation to protect information.  The logical 0 is even parity and the logical 1 is odd parity.  Encoding logical bits in parity in this way effectively hides the information from Eve's bit flip error.

It is easy to see that the same strategy works for $N$ bits when all Eve can do is to flip all bits simultaneously. Namely, Alice and Bob agree to encode their logical bits into the bit-string pairs $x_1x_2\dots x_N$ and $y_1y_2\dots y_N$, where $y_i = x_i\oplus 1$ (addition modulo $2$), i.e., $y_i=0$ if $x_i=1$ and $y_i=1$ if $x_i=0$. This encoding strategy yields $N-1$ logical bits given $N$ physical bits, i.e., the code rate (defined as the number of logical bits to the number of physical bits) is $1-1/N$, which is asymptotically close to $1$ in the large $N$ limit.

\subsection{Collective Dephasing DFS}
\label{CD-DFS}

Let us now move to a genuine quantum example by analyzing in detail the operation of the simplest decoherence-free subspace (DFS).
Suppose that a system of $N$ qubits is coupled to a bath in a symmetric way,
and undergoes a dephasing process. Namely, qubit $j$ undergoes the
transformation 
\beq
|0\rangle _{j}\rightarrow |0\rangle _{j}\qquad |1\rangle _{j}\rightarrow
e^{i\phi }|1\rangle _{j},
\eeq%
which puts a random phase $\phi $ between the basis states $|0\rangle $ and $%
|1\rangle $ (eigenstates of $\sigma _{z}$ with respective eigenvalues $+1$
and $-1$). Notice how the phase $\phi $---by assumption---has no space ($j$)
dependence, i.e., the dephasing process is invariant under qubit
permutations. Suppose the phases have a distribution $p_\phi$ and define the matrix $R_{z}(\phi )=\mathrm{%
diag}\left( 1,e^{i\phi }\right) $ acting on the $\{|0\rangle ,|1\rangle \}$
basis. If each qubit is initially in an arbitrary pure state 
$|\psi \rangle _{j}=a|0\rangle _{j}+b|1\rangle _{j}$ 
then the random process outputs a state $R_{z}(\phi )|\psi \rangle _{j}$ with probability $p_\phi$, i.e., it yields a pure state ensemble $\{R_{z}(\phi )|\psi \rangle _{j},p_\phi\}$, which is equivalent to the density matrix 
$\rho _{j}= \sum_\phi p_\phi R_{z}(\phi )|\psi \rangle _{j}\bra{\psi}R_{z}^\dgr(\phi )$. Clearly, this is in the form of a Kraus OSR, with Kraus operators $K_\phi = \sqrt{p_\phi}R_{z}(\phi )$, i.e., we can also write $\rho _{j}= \sum_\phi K_\phi |\psi \rangle _{j}\bra{\psi}K_\phi^\dgr$. Thus, each of the qubits will decohere. To see this explicitly, let us assume that $\phi$ is continuously distributed, so that 
\beq
\rho _{j}=\int_{-\infty }^{\infty }p(\phi )R_{z}(\phi )|\psi \rangle _{j}\langle
\psi |R_{z}^{\dagger }(\phi )\,d\phi ,
\eeq%
where $p(\phi )$ is a probability density, and we assume the initial state
of all qubits to be a product state. For a Gaussian distribution, $p(\phi
)=\left( 4\pi \alpha \right) ^{-1/2}\exp (-\phi ^{2}/4\alpha )$, it is
simple to check that 
\beq
\rho _{j}=\left( 
\begin{array}{cc}
|a|^{2} & ab^{\ast }e^{-\alpha } \\ 
a^{\ast }be^{-\alpha } & |b|^{2}%
\end{array}%
\right) .
\eeq%
The decay of the off-diagonal elements in the computational basis is a
signature of decoherence.

Let us now consider what happens in the two-qubit Hilbert space. The four
basis states undergo the transformation 
\begin{eqnarray}
|0\rangle _{1}\otimes |0\rangle _{2} &\rightarrow &|0\rangle _{1}\otimes
|0\rangle _{2}  \notag \\
|0\rangle _{1}\otimes |1\rangle _{2} &\rightarrow &e^{i\phi }|0\rangle
_{1}\otimes |1\rangle _{2}  \notag \\
|1\rangle _{1}\otimes |0\rangle _{2} &\rightarrow &e^{i\phi }|1\rangle
_{1}\otimes |0\rangle _{2}  \notag \\
|1\rangle _{1}\otimes |1\rangle _{2} &\rightarrow &e^{2i\phi }|1\rangle
_{1}\otimes |1\rangle _{2} .
\end{eqnarray}%
Observe that the basis states $|0\rangle _{1}\otimes |1\rangle _{2}$ and $%
|1\rangle _{1}\otimes |0\rangle _{2}$ acquire the same phase, hence
experience the same error. Let us define encoded states by $|0_{L}\rangle
=|0\rangle _{1}\otimes |1\rangle _{2}\equiv |01\rangle $ and $|1_{L}\rangle
=|10\rangle $. Then the state $|\psi _{L}\rangle =a|0_{L}\rangle
+b|1_{L}\rangle $ evolves under the dephasing process as 
\beq
|\psi _{L}\rangle \rightarrow a|0\rangle _{1}\otimes e^{i\phi }|1\rangle
_{2}+be^{i\phi }|1\rangle _{1}\otimes |0\rangle _{2}=e^{i\phi }|\psi
_{L}\rangle ,
\eeq%
and the overall phase thus acquired is clearly unimportant. This means that
the 2-dimensional subspace DFS$_{2}(1)=\mathrm{Span}\{|01\rangle ,|10\rangle
\}$ of the 4-dimensional Hilbert space of two qubits is \emph{%
decoherence-free}. The subspaces DFS$_{2}(2)=\mathrm{Span}\{|00\rangle \}$
and\index{DFS (decoherence-free subspace)} DFS$_{2}(0)=\mathrm{Span}\{|11\rangle \}$ are also (trivially) DF, since
they each acquire a global phase as well, $1$ and $e^{2i\phi }$
respectively. Since the phases acquired by the different subspaces differ,
there is no coherence \emph{between} the subspaces. You might want to pause at this point to think about the similarities and differences between the quantum case and the case of two classical coins.

For $N=3$ qubits a similar calculation reveals that the subspaces 
\bes
\begin{eqnarray}
\mathrm{DFS}_{3}(2) &=&\mathrm{Span}\{|001\rangle ,|010\rangle ,|100\rangle
\},\quad \\
\mathrm{DFS}_{3}(1) &=&\mathrm{Span}\{|011\rangle ,|101\rangle ,|110\rangle
\}
\end{eqnarray}%
\ees
are DF, as well the (trivial)\ subspaces $\mathrm{DFS}_{3}(3)=\mathrm{Span}%
\{|000\rangle \}$ and $\mathrm{DFS}_{3}(0)=\mathrm{Span}\{|111\rangle \}$.

By now it should be clear how this generalizes. Let $\lambda _{N}$ denote
the number of $0$'s\textrm{\ }in a computational basis state (i.e., a
bitstring) over $N$ qubits. Then it is easy to check that any subspace
spanned by states with constant $\lambda _{N}$ is DF against collective
dephasing, and can be denoted $\mathrm{DFS}_{N}(\lambda _{N})$ in accordance
with the notation above. The dimensions of these subspaces are given by the
binomial coefficients: $d_{N}\equiv \dim [\mathrm{DFS}_{N}(\lambda _{N})]={%
\binom{N}{\lambda _{N}}}$ and they each encode $\log _{2}d_{N}$ qubits. It
might seem that we lost a lot of bits in the encoding process. However,
consider the encoding rate $r\equiv \frac{\log _{2}d_{N}}{N}$, defined as
the number of output qubits to the number of input qubits. Using Stirling's
formula $\log_2 x!\approx (x+\frac{1}{2})\log_2 x-x$ we find, for the case of
the highest-dimensional\index{DFS (decoherence-free subspace)} DFS ($\lambda _{N}=N/2$, for even $N$):%
\beq
r\overset{N\gg 1}{\approx }1-\frac{1}{2}\frac{\log _{2}N}{N},
\label{a3:eq:coll-deph}
\eeq%
where we neglected $1/N$ compared to $\log (N)/N$. Thus,
the rate approaches $1$ with only a logarithmically small correction. 

\subsection{Decoherence-free subspaces in the Kraus OSR}
Let us now partition the Hilbert space into two subspaces $\mathcal{H}_S=\mathcal{H}_G\oplus\mathcal{H}_N$. The subspace of the Hilbert space not affected is denoted by $\mathcal{H}_G$, the ``good'' portion, and $\mathcal{H}_N$ denotes the decoherence-affected, ``noisy" subspace. The point of this decomposition is to establish a general condition under which $\mathcal{H}_G$ remains unaffected by the open system evolution, and evolves unitarily. If we can do this then we are justified in calling  $\mathcal{H}_G$ a DFS. We now make two assumptions:
\begin{itemize}
\item Assume that it is possible to partition the Kraus operators as
\beq
K_\alpha(t)=g_\alpha U\oplus N_\alpha = \left(
\begin{array}{cc}
g_\alpha U & 0 \\
0 & N_\alpha
\end{array}
\right) ,
\label{K-DFS}
\eeq
such that $U$ defines a unitary operator acting solely on $\mathcal{H}_G$, $g_\alpha\in\C$, and $N_{\alpha}$ is an arbitrary (possibly non-unitary) operator acting solely on $\mathcal{H}_N$. 
\item Assume that the initial state is partitioned in the same manner, as
\beq
\rho_S(0)=\rho_G(0)\oplus\rho_N(0) = \left(
\begin{array}{cc}
\rho_G(0) & 0 \\
0 & \rho_N(0)
\end{array}
\right) ,
\eeq
where $\rho_G(0) :\mathcal{H}_G \mapsto \mc{H}_G$ and $\rho_N(0) :\mathcal{H}_N \mapsto \mc{H}_N$.
\end{itemize}

Note that by the normalization of the Kraus operators,
\beq
\sum_\alpha |g_\alpha|^2=1, \quad \sum_\alpha N^{\dagger}_\alpha N_\alpha =I_N.
\eeq
Under these two assumptions the Kraus OSR [Eq.~\eqref{eq:rhomap}] then becomes
\beq
\rho_S\mapsto\rho_S'=\sum_{\alpha}\left(g_\alpha U\oplus N_\alpha\right)\rho_S(0)\left(g^*_\alpha U^{\dagger}\oplus B^{\dagger}_\alpha\right)=
\left(
\begin{array}{cc}
U\rho_G(0) U^{\dagger} & 0 \\
0 & \sum_\alpha N_\alpha \rho_N(0) N^{\dagger}_\alpha
\end{array}
\right) .
\eeq
The remarkable thing to notice about this last result is that $\r_G$ evolves purely unitarily, i.e., it satisfies the definition of decoherence-freeness. If we disregard the evolution in the ``noisy" subspace we thus have the following result:

\begin{mytheorem}
If the two assumptions above hold then the evolution of an open system that is initialized in the ``good" subspace $\mathcal{H}_G$ is decoherence-free, i.e., $\mathcal{H}_G$ is a DFS.
\end{mytheorem}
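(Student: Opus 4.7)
The plan is to substitute the two block-diagonal structures from Assumptions 1 and 2 directly into the Kraus OSR of Eq.~\eqref{eq:rhomap} and read off the induced action on the $\mathcal{H}_G$ sector. First I would expand $K_\alpha \rho_S(0) K_\alpha^\dagger$ block by block: since both $K_\alpha$ and $\rho_S(0)$ are block-diagonal with respect to the decomposition $\mathcal{H}_S = \mathcal{H}_G \oplus \mathcal{H}_N$, the product is again block-diagonal, with $G$-block equal to $|g_\alpha|^2\, U \rho_G(0) U^\dagger$ and $N$-block equal to $N_\alpha \rho_N(0) N_\alpha^\dagger$. No off-diagonal $G$--$N$ blocks appear because $\rho_S(0)$ has none to begin with, and the block-diagonal structure of $K_\alpha$ prevents any from being generated.

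Next I would sum over $\alpha$. On the $G$-sector the only $\alpha$-dependence is the scalar $|g_\alpha|^2$, which pulls outside the operator product, leaving $\bigl(\sum_\alpha |g_\alpha|^2\bigr)\, U \rho_G(0) U^\dagger$. The normalization $\sum_\alpha |g_\alpha|^2 = 1$ is already stated in the excerpt, and is easily checked from $\sum_\alpha K_\alpha^\dagger K_\alpha = I_S$ combined with $\sum_\alpha N_\alpha^\dagger N_\alpha = I_N$. This yields the clean map $\rho_G(0) \mapsto U \rho_G(0) U^\dagger$, which is manifestly unitary on $\mathcal{H}_G$; by the definition of decoherence-freeness given earlier in the section, $\mathcal{H}_G$ is therefore a DFS.

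There is no genuine obstacle: the theorem is essentially a repackaging of the block-matrix identity already displayed immediately above its statement, and the full argument is a two-line calculation. The one point worth flagging is the indispensability of Assumption 2: had the initial state carried any off-diagonal block $\rho_{GN}(0)$, it would propagate to $g_\alpha U \rho_{GN}(0) N_\alpha^\dagger$, which generically survives the sum over $\alpha$, so the $G$-sector could no longer be described as an autonomous unitary channel. The real content of the result therefore lies in identifying the hypotheses --- what structure the system-bath coupling and the state preparation must share in order to realize a DFS --- rather than in the calculation itself, which is why the subsequent discussion in the review focuses on when and how Assumption 1 can be engineered from a microscopic Hamiltonian model.
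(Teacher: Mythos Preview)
Your proposal is correct and follows exactly the paper's own approach: it is the block-diagonal substitution into the Kraus OSR displayed immediately before the theorem, followed by the normalization $\sum_\alpha |g_\alpha|^2 = 1$ to obtain $\rho_G(0)\mapsto U\rho_G(0)U^\dagger$. Your additional remark on why Assumption~2 is indispensable (off-diagonal $G$--$N$ blocks would survive the sum) is a nice elaboration that the paper does not spell out.
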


\subsection{Hamiltonian DFS}
Discussing DFS in terms of Kraus operators works well, but we'd like to develop a bottom-up understanding of the DFS concept, using Hamiltonian evolution.  Assume we are given a system in which our computation is occurring, and a bath that is connected to the system.  The Hamiltonian governing the whole system can be written as usual as
\beq
\label{eq:DFShamiltonian}
H = H_{S} \otimes I_{B} + I_{S} \otimes H_{B} + H_{SB},
\eeq
where $H_{S}$ acts only on the system we are interested in, $H_{B}$ acts only on the bath, and $H_{SB}$ governs the interaction between the two.  Assume also, without loss of generality, that the interaction Hamiltonian can be written as
\beq
\label{eq:HSB}
H_{SB} = \sum_{\alpha} S_{\alpha} \otimes B_{\alpha},
\eeq
where each $S_\alpha$ is a pure-system operator and each $B_\alpha$ is a pure-bath operator. The Hilbert space can be written $\mathcal{H} = \mathcal{H}_S \otimes \mathcal{H}_B$, and $\mathcal{H}_S = \mathcal{H}_G \otimes \mathcal{H}_N$ where
\bes
\bea
\mathcal{H}_G &=& \text{Span} \lbrace \ket{\gamma_i} \rbrace\\
\mathcal{H}_N &=& \text{Span} \lbrace \ket{\nu_k} \rbrace\\
\mathcal{H}_B &=& \text{Span} \lbrace \ket{\beta_j} \rbrace
\eea
\ees
To formulate a theorem we also need the following assumptions:
\begin{enumerate}
  \item The system state is initialized in the \textit{good} subspace:
  \beq
  \rho_{S} = \rho_{G}\oplus 0 = \sum_{i,j} r_{ij} \ketb{\gamma_i}{\gamma_j}\oplus 0
  \label{eq:stateInit}
  \eeq
  \item The basis states of the good subspace are eigenvectors of the interaction Hamiltonian:
  \beq
  S_{\alpha} \ket{\gamma_i} = c_{\alpha} \ket{\gamma_i},  c_{\alpha} \in \mathbb{C}
  \label{eq:HSBonGood}
  \eeq
  \item The basis states of the good subspace, when acted on by the system Hamiltonian, remain in the good subspace:
  \beq
  H_S \ket{\gamma_i} \in \mathcal{H}_G
  \label{eq:HSonGood}
  \eeq
\end{enumerate}

With these assumptions in hand and with $U\left(t\right) = e^{-iHt}$, we can posit the following theorem.
\begin{thm}
\label{thm:hamiltonianDFS} 
Assuming 1-3, the evolution of the open system can be written as
\bes
\bea
\rho_S(t) = \Tr_B \lbrack U(t) \left( \rho_S \left(0\right) \otimes \rho_B \left(0\right) \right)U^{\dagger}(t) \rbrack &=& {U}_S(t) \rho_G \left(0\right) {U}_S^{\dagger}(t) \\
{U}_S(t) \ket{\gamma_i} \in \mathcal{H}_G
\eea
\ees
where $U_S(t)= e^{-iH_S t}$.
\end{thm}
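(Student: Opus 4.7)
My plan is to exploit Assumption 2 to show that $H_{SB}$ acts trivially (on the system side) on the good subspace, combine this with Assumption 3 to show that the full Hamiltonian restricted to $\mathcal{H}_G\otimes\mathcal{H}_B$ splits as a sum of two commuting terms, one acting purely on the system and one purely on the bath, and then invoke Assumption 1 to conclude that the joint evolution factors so that the partial trace over $B$ yields a unitary map on $\rho_G(0)$.

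Concretely, the first step is to compute $H_{SB}(\ket{\gamma_i}\otimes\ket{\beta_j})$. Using Eq.~\eqref{eq:HSB} and Eq.~\eqref{eq:HSBonGood},
\beq
H_{SB}\lp\ket{\gamma_i}\otimes\ket{\beta_j}\rp=\sum_\alpha c_\alpha \ket{\gamma_i}\otimes B_\alpha\ket{\beta_j}=\ket{\gamma_i}\otimes \tilde{B}\ket{\beta_j},\qquad \tilde{B}\equiv\sum_\alpha c_\alpha B_\alpha .
\eeq
So on $\mathcal{H}_G\otimes\mathcal{H}_B$, $H_{SB}$ reduces to $I_G\otimes\tilde{B}$. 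Together with the free terms in Eq.~\eqref{eq:DFShamiltonian}, the restriction of $H$ to $\mathcal{H}_G\otimes\mathcal{H}_B$ is therefore $H_G\otimes I_B + I_G\otimes(H_B+\tilde{B})$, where $H_G$ denotes $H_S$ restricted to $\mathcal{H}_G$ (which is a well-defined operator on $\mathcal{H}_G$ by Eq.~\eqref{eq:HSonGood}). These two summands commute because they act on disjoint tensor factors.

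The second step is to observe that $\mathcal{H}_G\otimes\mathcal{H}_B$ is an invariant subspace of $H$: Assumption 3 gives $H_S\ket{\gamma_i}\in\mathcal{H}_G$, so $H_G$ maps $\mathcal{H}_G$ into itself, and the bath terms clearly do not leave the tensor factor. By induction every power $H^n$ preserves $\mathcal{H}_G\otimes\mathcal{H}_B$, hence so does $U(t)=e^{-iHt}$; moreover, on this subspace $U(t)$ factors as $e^{-iH_G t}\otimes e^{-i(H_B+\tilde{B})t}\equiv U_S(t)\otimes U_B(t)$, where I identify $U_S(t)=e^{-iH_S t}$ on $\mathcal{H}_G$ (equivalently, $U_S(t)\ket{\gamma_i}=e^{-iH_G t}\ket{\gamma_i}\in\mathcal{H}_G$, which already proves the second part of the theorem).

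The final step plugs this factorization into the partial trace. By Assumption 1, $\rho_S(0)\otimes\rho_B(0)$ is supported in $\mathcal{H}_G\otimes\mathcal{H}_B$, so
\beq
\Tr_B\lb U(t)\lp\rho_G(0)\otimes\rho_B(0)\rp U^\dgr(t)\rb = \Tr_B\lb \lp U_S\rho_G(0)U_S^\dgr\rp\otimes\lp U_B\rho_B(0)U_B^\dgr\rp\rb = U_S(t)\rho_G(0)U_S^\dgr(t),
\eeq
using cyclicity of the trace and $\Tr[U_B\rho_B(0)U_B^\dgr]=1$. I expect the main technical obstacle to be purely notational: keeping straight the distinction between $H_S$ as an operator on all of $\mathcal{H}_S=\mathcal{H}_G\oplus\mathcal{H}_N$ versus its restriction $H_G$ to the good subspace, so that statements like ``$H=H_G\otimes I_B + I_G\otimes(H_B+\tilde{B})$ on $\mathcal{H}_G\otimes\mathcal{H}_B$'' are unambiguous. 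Everything else is a direct consequence of the three assumptions and the commuting-exponentials identity.
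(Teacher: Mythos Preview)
Your proposal is correct and follows essentially the same route as the paper: use Assumption~2 to collapse $H_{SB}$ to $I_G\otimes\tilde{B}$ on $\mathcal{H}_G\otimes\mathcal{H}_B$, combine with Assumption~3 to get a commuting decomposition $H_G\otimes I_B+I_G\otimes(H_B+\tilde{B})$, factor the propagator, and trace out the bath using Assumption~1. If anything, you are slightly more explicit than the paper about why $\mathcal{H}_G\otimes\mathcal{H}_B$ is $H$-invariant (and hence $U(t)$-invariant), a point the paper leaves implicit when it writes the factorized $U(t)$ globally rather than as a restriction.
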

\begin{proof}
Using equations \eqref{eq:HSB} and \eqref{eq:HSBonGood} we can write
\bes
\bea
\left( I_S \otimes H_B + H_{SB}\right) \ket{\gamma_i}_S \otimes \ket{\beta_j}_B &=&
\ket{\gamma_i}\otimes \left( H_B \ket{\beta_j}\right) + \sum_{\alpha} c_{\alpha} \ket{\gamma_i} \otimes B_{\alpha} \ket{\beta_j} \\
&=& \ket{\gamma_i} \otimes \left( \sum_{\alpha} c_{\alpha} B_{\alpha} + H_B\right) \ket{\beta_j} \\
&=& \ket{\gamma_i} \otimes H_{B'}\ket{\beta_j},
\eea
\ees
where $H_{B'}$ acts only on the bath.  Applying this to Eq.~\eqref{eq:DFShamiltonian} we find that the complete Hamiltonian can be decomposed into a portion that acts only on the system and a portion that acts only on the bath.
\beq
H \ket{\gamma_i} \otimes \ket{\beta_j} = \left( H_S \otimes I_B + I_S \otimes H_{B'} \right) \ket{\gamma_i} \otimes \ket{\beta_j} .
\eeq
If we plug this form of the Hamiltonian into the unitary evolution matrix we get
\beq
U\left(t\right) = e^{-i\left( H_S \otimes I_B + I_S \otimes H_{B'} \right)t} = U_S(t) \otimes U_C(t) ,
\eeq
where $U_x(t) = \exp(-itH_x)$, $x=S,C$.

To find $\rho(t)$ we apply this unitary to $\rho(0)$ with $\rho_S(0) = \rho_G(0)\oplus 0$.
\bes
\bea
\rho(t) &=& U(\rho_S(0)\otimes\rho_B(0))U^{\dagger}\\
				&=& U_S(\rho_G(0)\oplus 0)U_S^{\dagger}\otimes U_C\rho_B(0)U_C^{\dagger}
\eea
\ees
We find the state of our system of interest $\rho_S(t)$ by taking the partial trace.
\bes
\bea
\rho_S(t) &=& \Tr_B\lbrack U_S(\rho_G(0)\oplus 0)U_S^{\dagger}\otimes U_C\rho_B(0)U_C^{\dagger} \rbrack \\
&=& U_S (\rho_G(0)\oplus 0) U_S^{\dagger} \\
&=& U_S^G\rho_G(0) (U_S^G)^{\dagger} ,
\eea
\ees
where in the last equality we projected $U_S$ to the good subspace, i.e., $U_S^G \equiv U_S |_{\mc{H}_G}$.
\end{proof}
Thus Theorem~\ref{thm:hamiltonianDFS} guarantees that if its conditions are satisfied, a state initialized in the DFS will evolve unitarily.

\subsection{Deutsch's Algorithm}
As a first example of error avoidance utilizing the DFS construction, we can consider the first known algorithm that offers a quantum speed-up. Deutsch's Algorithm \cite{nielsen2000quantum} presents a simple decision problem in which the goal is to decide whether a function is constant or balanced. Let $f(x):\{0,1\}^n\mapsto\{0,1\}$ (in decimal notation $x\in \{0,1,\ldots,2^n-1\}$), denote the function, where if
\beq
f(x)=\left\{
\begin{array}{ll}
0, & \forall\, x \textrm{ or}\\
1, & \forall\, x\\
\end{array}\right.
\label{eq:Ceqs}
\eeq
then the function is called constant and if
\beq
f(x)=\left\{
\begin{array}{cc}
0, & \text{half the inputs}\\
1, & \text{half the inputs}\\
\end{array}\right.
\label{eq:Beqs}
\eeq
then the function is called balanced. Classically, we find that in the worst case, making a decision on whether $f(x)$ is constant or balanced requires a minimum of $2^n/2+1$ total queries to $f$. Deutsch and Jozsa showed that the exponential cost in $f$-queries is drastically reduced -- to just a single query! -- by considering a quantum version of the algorithm.

The decision problem can be recast in terms of the following quantum circuit:
\bigskip
\[
\Qcircuit @C=1em @R=0.7em {
      \lstick{\ket{0}} & {/^n} \qw & \qw &\gate{W^{\otimes n}} & \qw & \qw &	 \multigate{1}{U_f} & \qw & \qw &\gate{W^{\otimes n}} & \qw & \qw & \meter  \\
      \lstick{\ket{1}} & \qw & \qw &\gate{W} & \qw & \qw &	\ghost{U} & \qw & \qw & \qw & \qw & \qw & \qw\\
      								 & &\lstick{\ket{\psi_1}}& & &\lstick{\ket{\psi_2}}	 & & & \lstick{\ket{\psi_3}} & & & \lstick{\ket{\psi_4}}
}
\]
\newline\newline
where each classical bit $n$ corresponds to a qubit. The unitary operator $U$ performs the query on $f(x)$ by
\beq
U_f:\ket{x}\ket{y}\mapsto \ket{x}\ket{y\oplus f(x)}\textrm{ (addition mod 2)} ,
\label{eq:U_f}
\eeq
where the first register ($x$) contains the first $n$ qubits and the second register ($y$) contains the last qubit. Here $W$ represents the Hadamard gate: $W\ket{0}=\ket{+}$ and $W\ket{1}=\ket{-}$, where $\ket{\pm}=(\ket{0}\pm\ket{1})/\sqrt{2}$.

In order to illustrate Deutsch's algorithm for the quantum circuit above, consider the single qubit version ($n=1$). In this case there are four functions, two of which are constant and two of which are balanced: $\{f_0(x)=0, f_1(x)=1\}$ (constant), $\{f_2(x)=x,f_3(x)=\bar{x}\}$ (balanced), where the bar denotes bit negation. Clearly two classical queries to $f$ are required to tell whether $f$ is constant or balanced.

Initially the total system state is given by $\ket{\psi_1}=\ket{0}\ket{1}$. Applying the Hadamard gate, the system state becomes $\ket{\psi_2}=\ket{+}\ket{-}$. Applying the unitary operator $U_f$, the resulting state is
\bes
\begin{eqnarray}
\ket{\psi_3}=U_f\ket{+}\ket{-}&=&U_f\left[\frac{1}{2}\left(\ket{00}-\ket{01}+\ket{10}-\ket{11}\right)\right]\\
&=&\frac{1}{2}\left(\ket{0,0\oplus f(0)}-\ket{0,1\oplus f(0)}+\ket{1,0\oplus f(1)}-\ket{1,1\oplus f(1)}\right)\\
&=&\frac{1}{2}\left(\ket{0,f(0)}-\ket{0,\bar{f}(0)}+\ket{1,f(1)}-\ket{1,\bar{f}(1)}\right).
\end{eqnarray}
\ees
Applying each constant and balanced function to $\ket{\psi_3}$, we find
\beq
\ket{\psi_3}=\left\{
\begin{array}{cc}
f_0:&\ket{00}-\ket{01}+\ket{10}-\ket{11}=+\ket{+-}\\
f_1:&\ket{01}-\ket{00}+\ket{11}-\ket{10}=-\ket{+-}\\
f_2:&\ket{00}-\ket{01}+\ket{11}-\ket{10}=+\ket{--}\\
f_3:&\ket{01}-\ket{00}+\ket{10}-\ket{11}=-\ket{--}
\end{array}\right.
\eeq
for $f_j(x)\in\{0,1,x,\bar{x}\}$ as defined by Eqs.~(\ref{eq:Ceqs}) and (\ref{eq:Beqs}). The remaining Hadamard gate yields the final state
\beq
\ket{\psi_4}=\left\{
\begin{array}{cc}
f_0:&+\ket{0-}\\
f_1:&-\ket{0-}\\
f_2:&+\ket{1-}\\
f_3:&-\ket{1-}
\end{array}\right.
\label{psi4}
\eeq
and the characteristic of the function is determined by measuring the $1$st qubit: a result of $0$ indicates a constant function, a result of $1$ a balanced function. Thus, remarkably, we find that the quantum version only requires a {\em single} query to the function, while the classical case requires two queries (this scenario is the original Deutsch algorithm).

The circuit depicted above can be subjected to a similar analysis in the $n$-qubit case (the Deutsch-Jozsa algorithm) and the conclusion is that the quantum version of the algorithm still requires only a single $f$-query, thus resulting in an exponential speed-up relative to its classical counterpart in the worst case.

\subsection{Deutsch's Algorithm With Decoherence}
To gain an understanding of how a DFS works we can look at the Deutsch Problem with added decoherence.  We can consider the circuit diagram for the single qubit Deutsch algorithm, but introduce a dephasing element as follows, where the dotted box denotes dephasing on the top qubit only:
\bigskip
\[
\Qcircuit @C=1em @R=0.7em {
      \lstick{\ket{0}} & \qw & \qw &\gate{W} & \qw & \gate{Z} & \qw &	 \multigate{1}{U_f} & \qw & \qw &\gate{W} & \qw & \qw & \meter  \\
      \lstick{\ket{1}} & \qw & \qw &\gate{W} & \qw & \gate{I} & \qw &	 \ghost{U} & \qw & \qw & \qw & \qw & \qw & \qw\\
      								 & &\lstick{{\r_1}}& & &\lstick{\r_2}	& & \lstick{{\r_2'}} & & \lstick{{\r_3}} & & & \lstick{{\r_4}} \gategroup{1}{6}{2}{6}{0.5em}{.}
}
\]
\newline\newline
The Kraus Operators governing the dephasing of $ \rho_{2} $ are:
\bes
\bea
\label{eq:dfsK0}
K_{0} &=& \sqrt{1-p} I_{1} \otimes I_{2}\\
\label{eq:dfsK1}
K_{1} &=& \sqrt{p} Z_{1} \otimes I_{2}
\eea
\ees
With probability $(1-p)$ nothing happens.  However with probability $p$ the first qubit experiences dephasing.
If we follow the density matrix states through the algorithm we can see the effect this dephasing has on our result.  As before we have $\rho_{1} = \ket{01} \bra{01}$ and, $\rho_{2} = \ket{+-} \bra{+-}$. By applying the Kraus operators we find the state after dephasing to be
\bes
\bea
\rho_{2}'	&=& K_{0} \rho_{2} K_{0}^{\dagger} + K_{1} \rho_{2} K_{1}^{\dagger}\\
				&=& (1-p)\rho_{2} + p (Z \otimes I) \ket{+-} \bra{+-} (Z \otimes I)^{\dagger} .
\eea
\ees
It is easy to check that $ Z \ket{+} = \ket{-} $, thus we find
\beq
\rho_{2}'	=	(1-p) \ket{+-} \bra{+-} + p \ket{--} \bra{--} .
\eeq
Using Eq.~\eqref{eq:U_f} we can compute $\rho_3$
\beq
\rho_{3} = \begin{cases}
(1-p) \ket{+-} \bra{+-} + p \ket{--} \bra{--} \quad f_0,f_1 \textrm{ (constant)}\\
(1-p) \ket{--} \bra{--} + p \ket{+-} \bra{+-} \quad f_2,f_3 \textrm{ (balanced)}
\end{cases} .
\eeq
After applying the final Hadamard we see that the state we will measure is
\beq
\rho_{4} = \begin{cases}
(1-p) \ket{0-} \bra{0-} + p \ket{1-} \bra{1-} \quad f_0,f_1 \textrm{ (constant)}\\
(1-p) \ket{1-} \bra{1-} + p \ket{0-} \bra{0-} \quad f_2,f_3 \textrm{ (balanced)}
\end{cases}.
\eeq
If we now measure the first qubit to determine whether the function is constant or balanced, with probability $p$ we will misidentify the outcome.  For example, if we obtain the outcome $1$, with probability $p$ this could have come from the constant case. But, according to Eq.~\eqref{psi4} the outcome $1$ belongs to the balanced case.

It is possible to overcome this problem by use of a DFS.  Let us again modify the original circuit design.  We can add a third qubit and then define logical bits and gates.
\bigskip
\[
\Qcircuit @C=1em @R=0.7em {
      \lstick{\ket{\bar{0}}} & \qw {/^2}& \qw &\gate{W_L} & \qw & \gate{ZZ} & \qw &	 \multigate{1}{U_{fL}} & \qw & \qw &\gate{W_L} & \qw & \qw & \meter  \\
      \lstick{\ket{1}} & \qw & \qw &\gate{W} & \qw & \gate{I} & \qw &	 \ghost{U} & \qw & \qw & \qw & \qw & \qw & \qw\\
      								 & &\lstick{{\r_1}}& & &\lstick{\r_2}	& & \lstick{{\r_2'}} & & \lstick{{\r_3}} & & & \lstick{{\r_4}} \gategroup{1}{6}{2}{6}{0.5em}{.}
}
\]
\bigskip

Now the $Z$ dephasing acts simultaneously on both top qubits, that comprise the logical qubit $\ket{\bar{0}}$. In this case the Kraus operators are
\bea
K_{0} = \sqrt{1-p} III,\qquad
K_{1} = \sqrt{p} ZZI\ ,
\eea
where $ ZZI = \sigma^{z} \otimes \sigma^{z} \otimes I $.  Recall the requirements for a DFS.  The Kraus operators, as in Eq.~\eqref{K-DFS}, must be of the form
\beq
K_{\alpha} = \left(
\begin{array}{c|c}
g_{\alpha}U & 0\\ \hline
0 & B_{\alpha}
\end{array}
\right)
\label{K-DFS2}
\eeq
and the state must be initialized in a \textit{good} subspace, i.e., $ \rho_S = \rho_{G} \oplus \rho_{N}$, where the direct sum reflects the same block structure as in Eq.~\eqref{K-DFS2}.  If these conditions are met then $ \rho_S' = \sum K_{\alpha} \rho_S K_{\alpha}^{\dagger} = U \rho_G U^{\dagger} \oplus \r'_N$.  In other words the evolution of $\r_G$ is entirely unitary.  We start by checking the matrix form of the Kraus operators.  $K_{0}$ is simply the identity and trivially satisfies this condition.  We can check the $ZZ$ portion of $K_{1}$ since that is what will act on our logical qubit.
\beq
ZZ  =
\begin{pmatrix}
1 & 0 & 0 & 0\\
0 & -1 & 0 & 0\\
0 & 0 & -1 & 0 \\
0 & 0 & 0 & 1
\end{pmatrix}
\begin{array}{c}
00 \\ 01 \\ 10 \\ 11
\end{array}.
\eeq
This obviously doesn't fit the required matrix format, in that there is no block of $1$'s like in $K_0$.  However with a simple reordering of the basis states we obtain the following matrix
\beq
ZZ  =
\begin{pmatrix}
1 & 0 & 0 & 0\\
0 & 1 & 0 & 0\\
0 & 0 & -1 & 0 \\
0 & 0 & 0 & -1
\end{pmatrix}
\begin{array}{c}
00 \\ 11 \\ 01 \\ 10
\end{array}.
\eeq
Now we have a $2\times 2$ matrix of $1$'s, so both $ZZ$ and the identity matrix act as the same unitary on the subspace spanned by $\ket{00}$ and $\ket{11}$. The full matrix, $K_{1}$, then takes the form
\beq
K_{1}  =\sqrt{p}
\begin{pmatrix}
I_{2} & 0 & 0 & 0\\
0 & I_{2} & 0 & 0\\
0 & 0 & -I_{2} & 0 \\
0 & 0 & 0 & I_{2}
\end{pmatrix}
\begin{array}{c}
00i \\ 11i \\ 01i \\ 10i
\end{array}
\eeq
where $I_2$ denotes the $2\times 2$ identity matrix, $i = \left\lbrace 0,1 \right\rbrace$, and $K_{0}$ is the $8\times 8$ identity matrix.  Thus we see that both $K_{0}$ and $K_{1}$ have the same upper block format, namely $U = I_{4x4}$, where $g_{0} = \sqrt{1-p}$ and $g_{1} = \sqrt{p}$.  Now we can define our logical bits $\ket{\bar{0}} = \ket{00}$ and $\ket{\bar{1}} = \ket{11}$.  With these states we can construct our logical Hadamard.
\beq
\left(W_{L}\right)_{4x4} = \left(\begin{array}{c|c}
W & 0 \\ \hline
0 & V
\end{array}\right) .
\eeq
Here the logical Hadamard acts as a regular Hadamard on our logical qubits.
\bes
\bea
W_{L} \ket{\bar{0}} = \ket{+_{L}}\\
W_{L} \ket{\bar{1}} = \ket{-_{L}}
\eea
\ees
where $\ket{\pm_{L}} = \frac{1}{\sqrt{2}} \left( \ket{\bar{0}} \pm \ket{\bar{1}} \right)$. The other unitary action of $H_{L}$, namely $V$, we don't care about.  Similarly we can construct a logical $U_{f}$.
\beq
\left(U_{fL}\right)_{8x8} = \left(\begin{array}{c|c}
U_{f} & 0 \\ \hline
0 & V'
\end{array}\right) .
\eeq
Again, $U_{fL}$ acts as $U_{f}$ on our logical bits, and $V'$ we don't care about. Neither $V$ nor $V'$ affect our logical qubits in any way.
Now that we have set our system up we can apply the Deutsch algorithm again to see if the DFS corrects the possibility of misidentifying the result.
Our system begins in the state $\rho_{1} = \ket{\bar{0}} \ket{1}\bra{\bar{0}}\bra{1}$ and after applying the logical Hadamard we get $\rho_{2} = \ket{+_{L}} \ket{-}\bra{+_L}\bra{-}$.  Now we can apply the Kraus operators to see the effect of the decoherence. $K_{0}$ has no effect other than to multiply the state by $\sqrt{1-p}$ because it is proportional to the identity matrix. It is enough to examine the effect of $K_{1}$ on the state $\ket{+_{L}}$.
\bes
\bea
K_{1}\ket{+_{L}} &=& \dfrac{1}{\sqrt{2}}\left( K_{1}\ket{\bar{0}} + K_{1}\ket{\bar{1}} \right)\\
				&=&  \dfrac{1}{\sqrt{2}}\sqrt{p}\left( I \ket{\bar{0}} + I \ket{\bar{1}} \right)\\
				&=& \sqrt{p} \ket{+_{L}}
\eea
\ees
Therefore $\rho_{2}' = \rho_{2}$.  The decoherence has no effect on our system and the rest of the algorithm will proceed without any possibility of error in the end.

\section{Collective Dephasing}
\label{sec:CD}

\subsection{The model}
Consider the example of a spin-boson Hamiltonian. In this example, the system of qubits could be the spins of $N$ electrons trapped in the periodic potential well of a crystalline lattice. The bath is the phonons of the crystal (its vibrational modes). We also assume that the system-bath interaction has \emph{permutation symmetry} in the sense that the interaction between the spins and phonons is the same for all spins, e.g., because the phonon wavelength is long compared to the spacing between spins. This assumption is crucial for our purpose of demonstrating the appearance of a DFS. If the potential wells are deep enough then the motional degrees of freedom of the electrons can be ignored.
Let $i$ denote the index for the set of $N$ electrons in the system (the same as the index for the set of occupied potential wells in the solid), let $k$ denote the vibrational mode index, $b_k^{\dagger}\ket{n_1,\dots,n_k,\dots}=\sqrt{n_k+1}\ket{n_1,\dots,n_k+1,\dots}$ is the action of the creation operator for mode $k$ on a Fock state with occupation number $n_k$, $b_k\ket{n}=\sqrt{n_k}\ket{n_1,\dots,n_k-1,\dots}$ is the action of the annihilation operator for mode $k$; $b_k^{\dagger}b_k$ is the number operator, i.e., $b_k^{\dagger}b_k\ket{n_1,\dots,n_k,\dots}=n_k\ket{n_1,\dots,n_k,\dots}$. With $\sigma_i^z$ the Pauli-$z$ spin operator acting on the $i^{th}$ spin, the system-bath Hamiltonian is
\beq
{H}_{SB}= \sum_{i,k} g_{i,k}^z \sigma_i^z\otimes(b_k + b_k^\dagger) + h_{i,k}^z \sigma_i^z\otimes b_k^\dagger b_k.
\eeq
The permutation symmetry assumption implies
\beq
g_{i,k}^z=g_k^z \ ,\ h_{i,k}^z=h_k^z ,
\eeq
i.e., the coupling constants do not depend on the qubit index. The system-bath Hamiltonian can then be written
\bes
\bea
{H}_{SB}&=& \sum_{i} \sigma_i^z \otimes \sum_{k}g_k^z (b_k + b_k^\dagger) + h_k^z b_k^\dagger b_k\\
&=& S_z \otimes B_z ,
\eea
\ees
where
\beq
S_z \equiv \sum_{i} \sigma_i^z,\quad B_z \equiv \sum_{k}g_k^z (b_k + b_k^\dagger) + h_k^z b_k^\dagger b_k.
\eeq
If these conditions are met, the bath acts identically on all qubits and system-bath Hamiltonian is invariant under permutations of the qubits' order. The operator $S_z$ is a \emph{collective} spin operator.

\subsection{The DFS}
We will now see that this model results in a DFs that is essentially identical to the one we saw in Section~\ref{CD-DFS}.

First consider the case of $N=2$. In light of the DFS condition Eq.~\eqref{eq:HSBonGood}:
\beq
N=2\Rightarrow S_z=Z\otimes I+I\otimes Z.
\eeq
Thus
\bea
\begin{array}{ccll}		
	\ket{00} &\stackrel{S_z}{\rightarrow}& 2 \cdot \ket{00}&\Rightarrow c_z = 2\\
	\ket{01} &\stackrel{S_z}{\rightarrow}& \ket{01} - \ket{01} = 0&\Rightarrow c_z = 0\\
	\ket{00} &\stackrel{S_z}{\rightarrow}& \ket{10} - \ket{10} = 0&\Rightarrow c_z = 0\\
	\ket{00} &\stackrel{S_z}{\rightarrow}& -2 \cdot \ket{11}&\Rightarrow c_z = -2\\	
\end{array}
\eea
It follows that the DFS's for the two spins are:
\bea
\begin{matrix}
	\dfsh_{N=2}(2)&=&\{\ket{00}\}\\
	\dfsh_{N=2}(0)&=&\textrm{Span}\{\ket{01},\ket{10}\}\\
	\dfsh_{N=2}(-2)&=&\{\ket{11}\}\\
\end{matrix}
\eea
where we used the notation $\dfsh_N(c_z)$ to denote the ``good'' subspace $\mathcal{H}_G$ for $N$ qubits, with eigenvalue $c_z$.

In the $\dfsh_{N=2}(0)$ DFS, there are two states, so we have an encoded qubit:
\bes
\label{eq:DFS0110}
\bea
  \ket{\overline{0}}&=\ket{01}\quad\textnormal{logical 0}\\
  \ket{\overline{1}}&=\ket{10}\quad\textnormal{logical 1}
\eea
\ees
For three spins we have:
\beq
N=3\Rightarrow S_z=Z\otimes I\otimes I\ +\ I\otimes Z\otimes I\ +\ I\otimes I\otimes Z
\eeq
Similarly, it follows that the DFS's for the three spins are:
\bea
\begin{array}{lcc}
	\dfsh_{N=3}(3)&=&\{\ket{000}\}\\
	\dfsh_{N=3}(1)&=&\textrm{Span}\{\ket{001},\ket{010},\ket{100}\}\\
	\dfsh_{N=3}(-1)&=&\textrm{Span}\{\ket{011},\ket{101},\ket{110}\}\\
	\dfsh_{N=3}(-3)&=&\{\ket{111}\}\\
\end{array}
\eea

We find that there are two possible encoded qutrits for $N=3$, one in $\dfsh_{N=3}(1)$ and the other in $\dfsh_{N=3}(-1)$.

In general, the DFS $\dfsh_N(c_z)$ is the eigenspace of each eigenvalue of $S_z$. It is easy to see that the number of spin-ups ($0$'s) and the number of spin-downs ($1$'s) in each eigenstate is constant throughout a given eigenspace. This corresponds to the value of total spin projection along $z$. In fact, for arbitrary $N$,
\beq
c_z = \# 0-\# 1.
\eeq

Figure~\ref{fig:Bratteli}, known as the \emph{Bratteli Diagram}, shows the eigenvalues ($y$-axis) of $S_z$ as $N$ ($x$-axis) increases. It represents the constellation of DF subspaces in the parameter-space, $(N,c_z)$.
\begin{figure}[t]
		\includegraphics{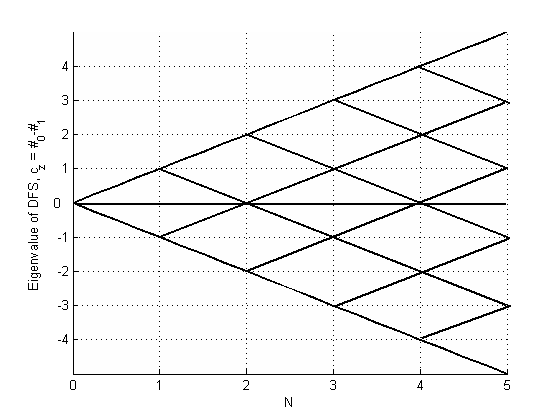}
		\caption{Bratteli diagram showing DFSs for each $N$. Each intersection point corresponds to a DFS. The number of paths leading to a given point is the dimension of the corresponding DFS. The diagram shows, e.g., that there is a $2$-dimensional DFS for $N=2$ at $c_z = 0$, yielding one encoded qubit.
}
	\label{fig:Bratteli}
\end{figure}
Each intersection point in the figure represents a DFS. Each upward stroke on the diagram indicates the addition of one new spin-up particle, $\ket{0}$, to the system, while each downward stroke indicates the addition of a new spin-down particle, $\ket{1}$, to the system. The number of paths from the origin to a given intersection point on the diagram is therefore exactly the  dimension of the eigenspace with eigenvalue $c_z$, which is given by
\beq 
\dim(\dfsh_N(c_z))={N \choose \# 0},
\eeq
and note that since $N=\# 0+\# 1$, we have $\# 0 = (N+c_z)/2$. The highest dimensional DFS for each $N$ is thus
\bea
	\max_{c_z}\{\dim(\dfsh_N(c_z)) \}= \left\{
  \begin{array}{ll}
  	{N \choose \frac{N}{2}}  & N \textrm{even}\\
  	{N \choose {\frac{N\pm 1}{2}}}& N \textrm{odd}
  \end{array}
  \right.
  \label{eq:maxcode}
\eea
Given a $D$-dimensional DFS, $\dfsh$:
\beq
 \textrm{\# of DFS qubits in }\dfsh =  \log_2 D  .
 \eeq
We can use this to calculate the rate of the DFS code, i.e.,
\bes
\begin{align}
r &\equiv \frac{\textrm{\# of DFS qubits in }\dfsh}{\textrm{\# of physical qubits}} = \frac{\log_2 D}{N} \\
&\overset{N\gg 1}{\approx }1-\frac{1}{2}\frac{\log _{2}N}{N},
\label{eq:rate}
\end{align}
\ees
where in the second line we used Eq.~\eqref{a3:eq:coll-deph}. For a DFS of given dimension it may be preferable to think in terms of qudits rather than qubits. For example, for the DFS $\dfsh_3(-1)$, the dimension $D=3$, and so this DFS encodes one qutrit.

Any superposition of states in the same DFS will remain unaffected by the coupling to the bath, since they all share the same eigenvalue of $S_z$, and hence only acquire a joint overall phase under the action of $S_z$. But a superposition of states in different DFSs, i.e., eigenspaces of $S_z$, will not evolve in a decoherence-free manner, since they will acquire relative phases due to the different eigenvalues of $S_z$.

We don't \emph{have} to build a multi-qubit DFS out of the largest good subspace. In some cases it makes sense to sacrifice the code rate to gain simplicity or physical realizability. We shall see an example of this in the next subsection.

\subsection{Universal Encoded Quantum Computation}
\label{sec:UEQC}

From here on, encoded qubits will be called `logical qubits'. To perform arbitrary single qubit operations, we need to be able to apply any two of the Pauli operators on the logical qubits. On our example system, $\dfsh_2(0)$, we have:
\begin{eqnarray*}
	\ket{\bar{0}}&\equiv&\ket{01}\\
	\ket{\bar{1}}&\equiv&\ket{10}\\
	\ket{\bar{\phi}}&:&\textnormal{DFS encoded logical qubit, $a\ket{\bar{0}}+b\ket{\bar{1}}$}\\
	\bar{U}&:&\textnormal{Logical operator on the DFS qubits}\\
\end{eqnarray*}
Thus the logical Pauli-$z$ operator is
\bea
\left.
	\begin{array}{lcr}
		\bar{Z}\ket{\bar{0}} &=& \ket{\bar{0}}\\
		\bar{Z}\ket{\bar{1}} &=&-\ket{\bar{1}}
	\end{array} \right\}
	\Rightarrow \bar{Z} = Z\otimes I \Rightarrow
	\left\{
	\begin{array}{lrr}
		\ket{01}\stackrel{Z\otimes I}{\rightarrow}&\ket{01}=&\ket{\bar{0}}\\
		\ket{10}\stackrel{Z\otimes I}{\rightarrow}&-\ket{10}=&-\ket{\bar{1}}
	\end{array}
	\right. ,
\eea
and the logical Pauli-$x$ operator is
\bea
\left.
	\begin{array}{lcr}
		\bar{X}\ket{\bar{0}} &=& \ket{\bar{1}}\\
		\bar{X}\ket{\bar{1}} &=& \ket{\bar{0}}
	\end{array} \right\}
	\Rightarrow \bar{X} = X\otimes X \Rightarrow
	\left\{
	\begin{array}{lrr}
		\ket{01}\stackrel{X\otimes X}{\rightarrow}&\ket{10}=&\ket{\bar{1}}\\
		\ket{10}\stackrel{X\otimes X}{\rightarrow}&\ket{01}=& \ket{\bar{0}}
	\end{array}
	\right. .
\eea

In general, suppose we have $2N$ physical qubits all experiencing collective dephasing. We can pair them into $N$ logical qubits, each pair in $\dfsh_2(0)$, and perform $Z$ or $X$ logical operations on the $i^{\rm th}$ logical qubit using the following operators:
\bes
\label{eq:logXZ}
\bea
\bar{Z}_i &\equiv& Z_{2i-1}\otimes I_{2i} ,\\
\bar{X}_i &\equiv& X_{2i-1}\otimes X_{2i}.
\eea
\ees
Note that by using this pairing we obtain a code whose rate is $N/(2N) = 1/2$, which is substantially less than the highest rate possible, when we use the codes specified by Eq.~\eqref{eq:maxcode}. However, the sacrifice is well worth it since we now have simple and physically implementable encoded logical operation involving at most 2-body interactions. Moreover, it is easy to see that the tensor product of two-qubit DFSs is itself still a DFS (just check that all basis states in this tensor product space still satisfy the DFS condition, i.e., have the same eigenvalue under the action of $S_z$).

We can define arbitrary rotations about the logical $X$ or $Z$ axis as $R_{\bar{X}}(\theta) = \exp[i\bar{X}\theta]$ and $R_{\bar{Z}}(\phi) = \exp[i\bar{Z}\phi]$. An arbitrary single logical qubit rotation (an arbitrary element of SU(2)) can then be obtained using the Euler angle formula, as a product of three rotations:
$R_{\bar{X}}(\theta_2)R_{\bar{Z}}(\phi)R_{\bar{X}}(\theta_1)$.

To generate arbitrary operators on multiple qubits, we need to add another gate to the generating set: the controlled phase gate. The logical controlled phase gate can be generated from  $\bar{Z}_i\otimes \bar{Z}_j \equiv Z_{i} \otimes Z_{2j-1}$. Thus a Hamiltonian of the form
\beq
\bar{H}_S = \sum_{i}\omega_{Z_i}(t) \bar{Z_i} + \sum_{i} \omega_{X_i}(t)\bar{X_{i}} + \sum_{i\ne j} \Omega_{ij}(t) \bar{Z_{i}}\otimes \bar{Z_{j}}
\eeq
not only does not take the encoded information outside the DFS $\dfsh_2(0)$ of each of the $N$ encoded qubits, i.e., satisfies the DFS preservation condition Eq.~\eqref{eq:HSonGood}, it is also sufficient to generate a universal set of logical gates over the logical DFS qubits. Moreover, this Hamiltonian is composed entirely of one- and two-body physical qubit operators, so it is physically implementable.

\section{Collective Decoherence and Decoherence Free Subspaces}
\label{sec:CDec}

The collective dephasing model can be readily modified to give the more general collective decoherence model. The interaction Hamiltonian has the following form:
\begin{eqnarray}
	H_{SB} &=&\sum_{i,k} \left[
			g_{i,k}^z\sigma_i^z\otimes(b_k + b_k^\dagger) +
			g_{i,k}^+\sigma_i^+\otimes b_k +
			g_{i,k}^-\sigma_i^-\otimes b_k^\dagger
			\right] \label{eq:HSB_CD}
\end{eqnarray}
where
\beq
	\sigma^{\pm}= \frac{1}{2} \left(\sigma_x \mp i\sigma_y\right) \label{sigma_pm} ,
\eeq
corresponds to the raising $(+)$ and lowering $(-)$ operators respectively, i.e.,
\bes
\begin{eqnarray}
\sigma^{+} \ket{0} &=& \ket{1} ,\ \sigma^{+} \ket{1} = 0 \\
\sigma^{-} \ket{0} &=& 0,  \ \sigma^{-} \ket{1} = \ket{0}
\end{eqnarray}
\ees
where $0$ here corresponds to the null vector and should not be confused with the $\ket{0}$ state. Thus, $\sigma^+ = | 1 \rangle \langle 0 |/2$ and $\sigma^- = | 0 \rangle \langle 1 |/2$ and the factor of $1/2$ is important for the rules of angular momentum addition we shall use below. Thus all the Pauli matrices in this section also include factors of $1/2$. 

The first term in the summation of \eqref{eq:HSB_CD} corresponds to an energy conserving (dephasing) term while the second and third terms correspond to energy exchange via, respectively, phonon absorption/spin excitation, and spin relaxation/phonon emission.

By assuming that all qubits are coupled to the same bath, thereby introducing a permutation symmetry assumption, we have 
\bes
\begin{eqnarray}
	\quad g_{ik}^\alpha&=&g_{k}^\alpha, \quad \forall k, \quad \alpha \in \{+,-,z\}\\
	\Rightarrow H_{SB}&=&\sum_i{\sigma_i^+} \otimes \overbrace{\sum_k{g_k^+b_k}}^{B_+}
		+ \sum_i{\sigma_i^-} \otimes \overbrace{\sum_k{g_k^- b_k^\dagger}}^{B_-}
		+\sum_i{\sigma_i^z} \otimes \overbrace{\sum_k{g_k^z (b_k+b_k^\dagger)}}^{B_z}\\
	&=&\sum_{\alpha \in \{+,-,z\}}{S_\alpha \otimes B_\alpha},
\end{eqnarray}
\label{eq:CD}
\ees
where
\beq
S_\alpha= \displaystyle \sum_{i=1}^{N}{\sigma_i^\alpha} \label{eq:totspin}
\eeq
is the total spin operator acting on the entire system of $N$ physical qubits.
We can derive the following relations directly from the commutation relations of the Pauli matrices:
\beq
	\left.
	\begin{array}{lcr}
		\left[S_\pm, S_z\right]&=& \pm2 S_\pm\\
		\left[S_-, S_+\right]&=& S_z
	\end{array}
	\right\}\parbox[c]{0.5 \linewidth}{commutation relations for SL(2) triple}\\
	\label{eq:commute}
\eeq
where SL(2) is a Lie algebra \cite{Hammermesh:grouptheorybook}.

We wish to define the total angular momentum operator $\vec{S}^2$ in terms of the angular momenta operators around each axis. It will be convenient to define the vector of angular momenta: $\vec{S} = \left( S_x, S_y, S_z\right)$ where $S_x \equiv S_+ + S_-$ and $S_y\equiv i(S_+ - S_-)$. We note that $ \vec{S}^2 \equiv \vec{S}\cdot \vec{S} = \sum_{\alpha \in \{x,y,z\}} {S_\alpha^2}$ satisfies $[\vec{S}^2, S_z] = 0$. Since $\vec{S}^{2}$ and $S_{z}$ commute and are both Hermitian, they are simultaneously diagonalizable, i.e., they share a common orthonormal eigenbasis.

Recalling some basic results from the quantum theory of angular momentum, we note that for the basis $\left\{ \ket{S,m_S}\right\}$ where $S$ represents the total spin quantum number of $N$ spin-$1/2$ particles and $m_S$ represents the total spin projection quantum number onto the $z$-axis, we can show that
\bes
\bea
\vec{S}^2\ket{S,m_S} &\equiv& S(S+1)\ket{S,m_S}\\
	S_z\ket{S,m_S} &\equiv& m_S\ket{S,m_S}
\eea
\ees
where
\beq S \in \left\{0, \frac{1}{2}, 1, \frac{3}{2},2, \dots , \frac{N}{2}\right\}\eeq
and
\beq m_S \in \left\{-S,-S+1,\ldots, S-1,S\right\}.
\eeq
Keeping in mind that the basis states of the good subspace are eigenvectors of the interaction Hamiltonian and also satisfy Eq.~\eqref{eq:HSBonGood}
for $\alpha \in \{+,-,z\}$, let us examine the cases $N = 1,2,3$ and $4$ in turn.

\subsection{One Physical Qubit}
For a single physical qubit ($N = 1$), the basis $\left\{\ket{0}, \ket{1}\right\}$ corresponds to that of our familiar spin-$\frac{1}{2}$ particle, with $S=\frac{1}{2}$ and $m_S = \pm \frac{1}{2}$. We identify our logical zero and one states as follows

\bes
\begin{eqnarray}
\ket{0} &=& \ket{S=\frac{1}{2},m_S = \frac{1}{2}} \\
\ket{1} &=& \ket{S=\frac{1}{2},m_S = -\frac{1}{2}} 
\end{eqnarray}
\ees

\subsection{Two Physical Qubits}
For two physical qubits ($N = 2$), which we label $A$ and $B$, with individual spins $S_A =\frac{1}{2}$ and $S_B =\frac{1}{2}$, we first note that the prescription for adding angular momentum (or spin) given $\vec{S_A}$ and $\vec{S_B}$, is to form the new spin operator $\vec{S} = \vec{S_A} + \vec{S_B}$ with eigenvalues
\beq S \in \left\{\left|S_A - S_B \right|, \dots , S_A + S_B \right\}\eeq
with the corresponding spin projection eigenvalues
\beq m_S \in \left\{-S, \ldots, S\right\}.\eeq

Thus, for two physical qubits, we see that the total spin eigenvalues $S_{\left(N=2\right)}$ can only take the value $0$ or $1$. For $S_{\left(N=2\right)} =0$, we see that $m_S$ can only take the value $0$ (singlet subspace) whereas when $S_{\left(N=2\right)} =1$, $m_S$ can take any one of the three values $-1,0,1$ (triplet subspace). For our singlet subspace,
\beq
\ket{S_{\left(N=2\right)} = 0, m_S = 0} = \frac{1}{\sqrt{2}}\left(\ket{01} - \ket{10}\right)
\eeq
we see that $S_z \ket{S_{\left(N=2\right)} = 0, m_S = 0} =0$ for our system operator $S_z = \sigma_{1}^{z} \otimes I + I \otimes \sigma_{2}^{z}$. In fact, $S_{\alpha} \ket{S_{\left(N=2\right)} = 0, m_S = 0}=0$ for $\alpha \in \left\{+,-,z\right\}$ where $S_{\alpha} = \sum_{i} \sigma_{i}^{\alpha}$. Similarly, it can also be shown that $\vec{S}^{2} \ket{S_{\left(N=2\right)} = 0, m_S = 0} =0$. Since the singlet state clearly satisfies condition \eqref{eq:HSBonGood}, we conclude that $\ket{S_{\left(N=2\right)} = 0, m_S = 0}$ is by itself a one-dimensional DFS. However, we also note that the triplet states are not eigenstates of $S_z,S_+, S_-$ and thus violate Eq.~\eqref{eq:HSBonGood}.

\subsection{Three Physical Qubits}
For three physical qubits ($N = 3$), let us label the physical qubits $A$, $B$ and $C$ each with corresponding total spins $S_A = \frac{1}{2}$, $S_B = \frac{1}{2}$ and $S_C = \frac{1}{2}$. If we think of this system as a combination of a pair of spins ($A$ and $B$) with another spin $C$, we can again apply our rule for adding angular momenta which gives us from combining our pair of physical qubits into a $S_{\left(N=2\right) } = 0$ system with a spin-$\frac{1}{2}$ particle, eigenvalues of the total spin operator of
\begin{equation*}
S_{\left(N=3\right)} = \left|0-\frac{1}{2}\right|,\dots, \left|0+\frac{1}{2}\right| = \frac{1}{2}
\end{equation*}
with corresponding spin projection eigenvalues $m_S = \pm \frac{1}{2}$. If instead we chose to combine our pair of physical qubits $A$ and $B$ into a $S_{\left(N=2\right) } = 1$ system with a spin-$\frac{1}{2}$ particle, the eigenvalues of the total spin operator would be
\begin{equation*}
S_{\left(N=3\right)} = \left|1-\frac{1}{2}\right|,\dots, \left|1+\frac{1}{2}\right| = \frac{1}{2}, \frac{3}{2}
\end{equation*}
with corresponding spin projection eigenvalues $m_S = \pm \frac{1}{2}$ for $S_{\left(N=3\right)} = \frac{1}{2}$ or $m_S = \pm \frac{1}{2}, \pm \frac{3}{2}$ for $S_{\left(N=3\right)} = \frac{3}{2}$. These distinct cases arise because there are $2$ distinct ways we can get a total spin of $S = \frac{1}{2}$ from a system with $3$ physical qubits, either with two of the qubits combined as a spin-$1$ system and then combined with the spin-$\frac{1}{2}$ particle or alternatively with two qubits combined as a spin-$0$ system and subsequently combined with the remaining spin-$\frac{1}{2}$ particle.

\subsection{Generalization to $N$ physical qubits}
The extension of this idea of combining spin angular momenta is straightforward. There is an inductive method of building up from the above procedure to higher $N$. Suppose we wish
to build up the spin states of $N$ physical qubits. We would first build up the states
for a set of $N-1$ physical qubits and then couple the spin of the last qubit.

Suppose we consider the case with $N=4$ physical qubits. We can create a Bratteli diagram for
this scenario (Figure~\ref{fig:full-Bratteli}). The decoherence free states
lie on the axis where $S=0$. There are two possible paths to build up the
states from $N=0$ to $N=4$. So we can construct a qubit with each logical
state $\ket{\bar{0}} $ and $\ket{\bar{1}}$ equal to a decoherence free state indexed by the path label $\lambda$.
\begin{align}
\ket{\bar{0}}  &  =%
{\includegraphics[
natheight=1.266100in,
natwidth=4.266100in,
height=0.192in,
width=0.5803in
]%
{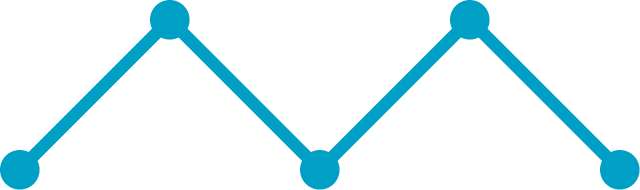}%
}%
=\ket{S=0,m_{S}=0,\lambda=0} \\
\ket{\bar{1}}  &  =%
{\includegraphics[
natheight=2.253700in,
natwidth=4.253100in,
height=0.32in,
width=0.5812in
]
{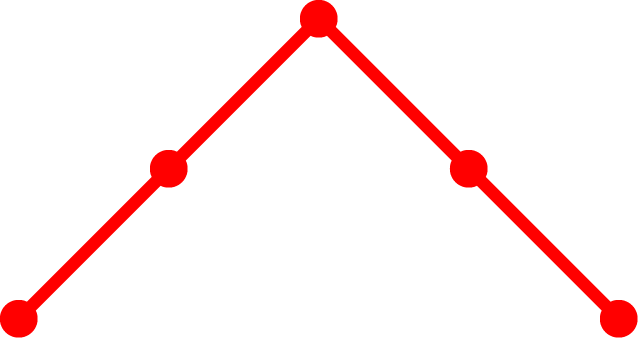}%
}
=\ket{S=0,m_{S}=0,\lambda=1}
\end{align}%
\begin{figure}
[ptb]
\begin{center}
\includegraphics[
natheight=10.820500in,
natwidth=10.420100in,
height=3.096in,
width=4.3976in
]
{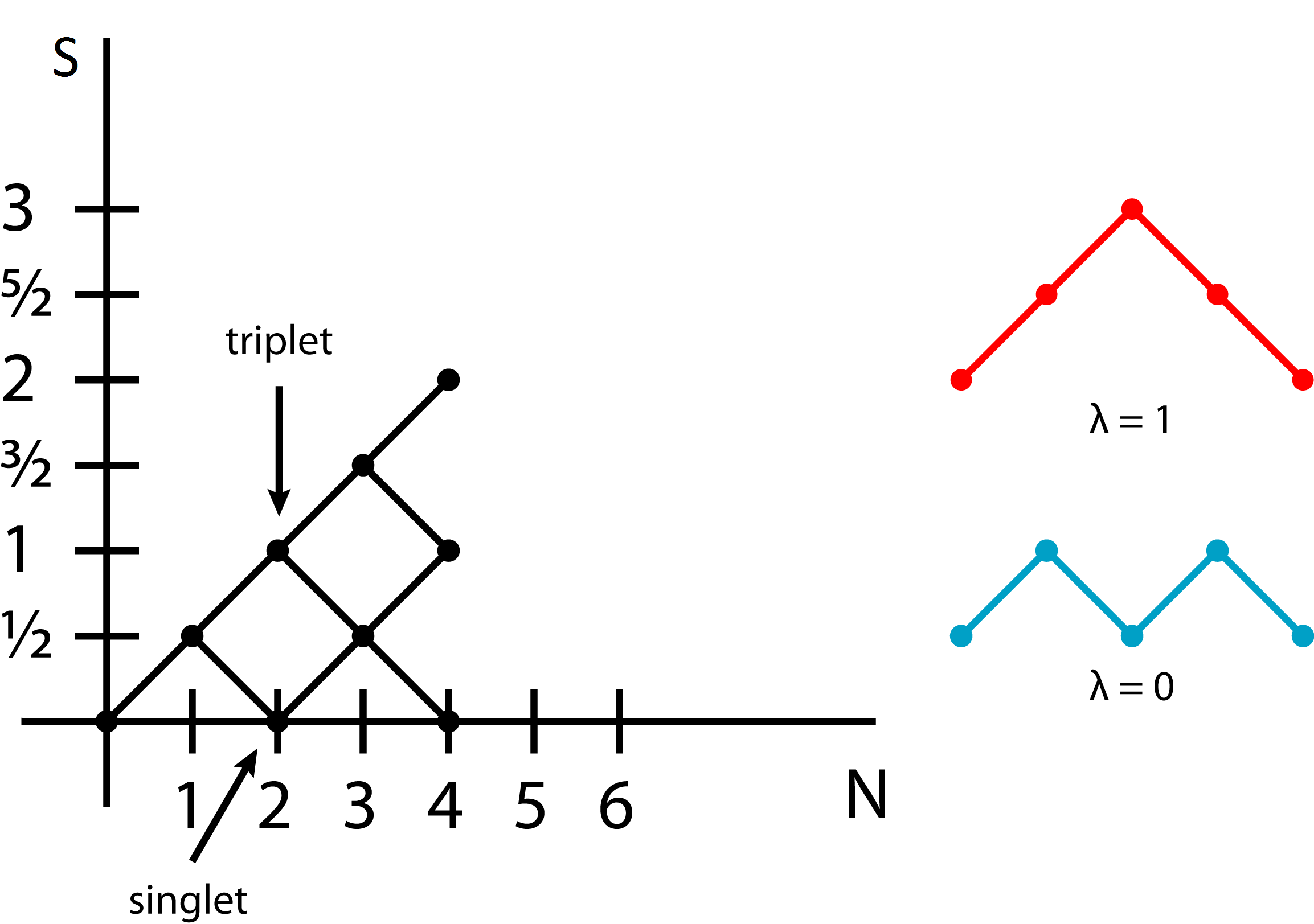}
\caption{Example Bratteli diagram for $N=4$ physical qubits. The decoherence free
states lie on the points of the axis where $S=0$. There are two ways of
getting to $S=0$ when $N=4$ because there are two possible paths starting from
$N=0$. So we can realize a qubit by setting the logical computational basis
states $\ket{\bar{0}} $ and $\ket{\bar{1}}$ to these two decoherence free states. The parameter $\lambda$ indexes the two
possible paths for $N=4$ physical qubits.}%
\label{fig:full-Bratteli}%
\end{center}
\end{figure}

Before proceeding, let us define the singlet state $\ket{ s} _{ij}$ over the
$i^{\rm th}$ and $j^{\rm th}$ qubit as
\beq
\ket{s}_{ij}\equiv \frac{1}{\sqrt{2}} \left(\ket{0
_{i} 1_{j}}- \ket{1_{i} 0_{j}}\right)
\eeq
and the three triplet states as
\bes
\begin{eqnarray}
\ket{t^{-}}_{ij} &\equiv& \ket{1_{i} 1_{j}} = \ket{S=1,m_S=-1}  \\
\ket{t^{0}}_{ij} &\equiv&\frac{1}{\sqrt{2}} \left(\ket{0_{i} 1_{j}}+ \ket{1_{i} 0_{j}}\right) = \ket{S=1,m_S=0}  \\
\ket{t^{+}}_{ij} &\equiv& \ket{0_{i} 0_{j}} = \ket{S=1,m_S=1}
\end{eqnarray}
\ees

For $N=4$ physical qubits, the logical zero is given by
\bes
\label{eq:dfs-0}
\begin{align}
\left\vert {\bar{0}}\right\rangle  &  =
{\includegraphics[
natheight=1.266100in,
natwidth=4.266100in,
height=0.192in,
width=0.5803in
]
{logical-1-path.png}%
}\\
&  =\ket{ \text{singlet}} \otimes \ket{\text{singlet}} \\
&  = \ket{s}_{12}\otimes \ket{s}_{34} \\
&=\frac{1}{2}
\left(
|0101\rangle-|0110\rangle-|1001\rangle+|1010\rangle
\right).
\end{align}
\ees
On the other hand the logical one will be later seen to be given by
\bes
\label{eq:logical1_N4}
\begin{align}
\ket{\bar{1}} &={\includegraphics[
natheight=2.253700in,
natwidth=4.253100in,
height=0.32in,
width=0.5812in
]
{logical-0-path.png}%
} \\
&=\frac{1}{\sqrt{3}} \left[\ket{t^{+}}_{12} \otimes \ket{t^{-}}_{34} + \ket{t^{-}}_{12} \otimes \ket{t^{+}}_{34} -\ket{t^{0}}_{12} \otimes \ket{t^{0}}_{34}\right] \\
&= \frac{1}{\sqrt{3}}
\left(
|1100\rangle+|0011\rangle-\frac{1}{2}|0101\rangle-\frac{1}{2}|0110\rangle-\frac{1}{2}|1001\rangle-\frac{1}{2}|1010\rangle
\right)
\end{align}
\ees

In a similar fashion, for $N=6$ physical qubits, we have for the logical zero
\beq
\ket{\bar{0}} = \ket{s}_{12} \otimes \ket{s}_{34} \otimes \ket{s}_{56}
\eeq

Note that permutations of qubit labels are permissible and can be used to define alternative basis states. Actually, we shall see in the next section that such permutations can be used to implement logical operations on the logical qubits.

\subsection{Higher Dimensions and Encoding Rate}
Clearly, more paths exist as $N$ grows. We thus have more logical states available as we increase $N$ because these states correspond to the distinct paths leading to each intersection point on the horizontal axis. There exists a combinatorial formula for the number of paths to each point in the Bratteli diagram with $S=0$ for a given $N$ and hence for the dimension $d_{N}$ of the DFS $\tilde{\mathcal{H}}(N)$ of $N$ spin-$\frac{1}{2}$ physical qubits
\beq
d_{N} \equiv\dim\left(\tilde{\mathcal{H}}\left(N\right)  \right)
=\frac{N!}{\left(N/2\right)!\left(N/2+1\right)!}
\label{eq:dim-DFS}
\eeq
As in the case of collective dephasing [Eq.~\eqref{eq:rate}] we can determine the encoding rate from the above formula. The encoding rate $r_{N}$ is the number of logical qubits $N_{L}$ we obtain divided by the the number of
physical qubits $N$ we put into the system. We can construct logical qubits from the logical
states in the DFS $\tilde{\mathcal{H}}(N)
$, and the number $N_{L}$\ of logical qubits is logarithmic in the number of
logical states of $\tilde{\mathcal{H}}(N)  $ with $N_{L}=\log
_{2}(d_{N})  $. So the encoding rate $r_{N}$ is
\beq
r_{N}\equiv \frac{\textrm{\# of DFS qubits in }\dfsh(N)}{\textrm{\# of physical qubits}}  =\frac{N_{L}}{N}=\frac{\log_{2}d_{N}}{N}.
\eeq
It can be shown using Stirling's approximation
\beq
\log_{2}N!\approx\left(N+1/2\right)
\log_{2}N-N
\eeq
for $N\gg1$, that the rate
\beq
r_{N}\approx 1-\frac{3}{2}\frac
{\log_{2}N}{N}
\eeq
for $N \gg 1 $ and hence that the rate $r_{N}$ asymptotically approaches unity
\beq
\lim_{N\rightarrow\infty}r_{N}=1.
\eeq
This implies that when $N$ is very large, remarkably we get about as many logical qubits out of our system as physical qubits we put into the system.

\subsection{Logical Operations on the DFS of Four Qubits}

How can we compute over a DFS? Suppose we group the qubits into blocks of length of $4$, and encode each block into the logical qubits given in Eqs.~\eqref{eq:dfs-0} and \eqref{eq:logical1_N4}.

Now, $\forall x,y\in \{0,1\}$, define the exchange operation $E_{ij}$ on the state $|x\rangle_i\otimes|y\rangle_j$ to be:
\beq
E_{ij}
\left(
|x\rangle_i\otimes|y\rangle_j
\right)
\equiv |y\rangle_i\otimes|x\rangle_j
\label{eq:Eij}
\eeq

Thus, $E_{ij}$ has the following matrix representation in the standard basis of two qubits:
\beq
E_{ij}=
\left(
\begin{array}{cccc}
  1 & 0 & 0 & 0 \\
  0 & 0 & 1 & 0 \\
  0 & 1 & 0 & 0 \\
  0 & 0 & 0 & 1 \\
\end{array}
\right)
\eeq
and it is easy to see that
\beq
\left[
H_{SB},E_{ij}
\right]=0
\hspace{10mm}
\forall i,j
\eeq
for the collective decoherence case. The exchange operator has a natural representation using the so-called Heisenberg exchange Hamiltonian, namely
\beq
H_{\textrm{Heis}} = \sum_{ij} J_{ij} \vec{S}_i\cdot \vec{S}_j
\eeq
where $\vec{S}_i \equiv (X_i,Y_i,Z_i)$, with $X,Y,Z$ the regular Pauli matrices (without the prefactor of $1/2$) and the $J_{ij}$ are controllable coefficients that quantify the magnitude of the coupling between spin vectors $\vec{S}_i$ and $\vec{S}_j$. Indeed, we easily find by direct matrix multiplication that
\beq
E_{ij} = \frac{1}{2}(\vec{S}_i\cdot \vec{S}_j + I)
\eeq
where $I$ is the $4\times 4$ identity matrix. Since the identity matrix will only give rise to an overall energy shift it can be dropped from $H_{\textrm{Heis}}$. The nice thing about this realization is that the Heisenberg exchange interaction is actually very prevalent as it arises directly from the Coulomb interaction between electrons, and has been tapped for quantum computation, e.g., in quantum dots \cite{Loss:1998zr,Mizel:2004ly}

We will now show that $E_{ij}$'s can be used to generate the encoded $X$, $Y$, and $Z$ operators for the encoded qubits in our $4$-qubit DFS case.

Consider the operator $\left(-E_{12}\right)$ and its action on the two encoded states given in Eqs.~\eqref{eq:dfs-0} and \eqref{eq:logical1_N4}:
\bes
\bea
\left(-E_{12}\right)|\bar{0}\rangle &=& -\left(E_{12}\right)\frac{1}{2}\left(|0101\rangle-|0110\rangle-|1001\rangle+|1010\rangle\right)\\
                                   &=& -\frac{1}{2}\left(|1001\rangle-|1010\rangle-|0101\rangle+|0110\rangle\right)\\
                                   &=& |\bar{0}\rangle\\
\left(-E_{12}\right)|\bar{1}\rangle &=& -\left(E_{12}\right)\frac{1}{\sqrt{3}}\left(|1100\rangle+|0011\rangle-\frac{1}{2}|0101\rangle-\frac{1}{2}|0110\rangle-\frac{1}{2}|1001\rangle-\frac{1}{2}|1010\rangle\right)\\
                                   &=& -\frac{1}{\sqrt{3}}\left(|1100\rangle+|0011\rangle-\frac{1}{2}|1001\rangle-\frac{1}{2}|1010\rangle-\frac{1}{2}|0101\rangle-\frac{1}{2}|0110\rangle\right)\\
                                   &=& -|\bar{1}\rangle
\eea
\ees
Therefore $\left(-E_{12}\right)$ acts as a $Z$ operator on the encoded qubits.

Similarly, we may check that $\frac{1}{\sqrt{3}}\left(E_{23}-E_{13}\right)$ acts like an $X$ operator on the encoded qubits.

Thus, we may define one set of $X$, $Y$, and $Z$ operators for the DFS in our case to be:
\bes
\bea
\bar{\sigma}^z & &\equiv\left(-E_{12}\right)\\
\bar{\sigma}^x & &\equiv\frac{1}{\sqrt{3}}\left(E_{23}-E_{13}\right)\\
\bar{\sigma}^y & &\equiv\frac{i}{2}\left[\bar{\sigma}^x,\bar{\sigma}^z\right]
\eea
\ees

As we saw in Section~\ref{sec:UEQC}, with the $\bar{\sigma}^x$ and $\bar{\sigma}^z$ operations we can construct arbitrary qubit rotations via the Euler angle formula:
\beq
\exp(i\theta\hat{n}\cdot\overrightharpoon{\bar{\sigma}})=\exp(i\alpha\bar{\sigma}^x)\exp(i\beta\bar{\sigma}^z)\exp(i\gamma\bar{\sigma}^x)
\eeq

To perform universal quantum computation we also need to construct entangling logical operations between the logical qubits. This too can be done using entirely using exchange operations. See \cite{Bacon:2000qf} for the original construction of such a gate between the logical qubits of the $4$-qubit DFS code, and 
\cite{Woodworth:2006bh} for a more recent and efficient construction.

\section{Noiseless/Decoherence Free Subsystems}
\label{noiseless}
\subsection{Representation theory of matrix algebras}
We begin this section by stating a theorem in representation theory of matrix algebras.

Recall the general form of the system-bath Hamiltonian, $H_{SB}=\sum_{\alpha} S_\alpha \otimes B_\alpha$.
Let $\mathcal{A}=\{S_\alpha\}$ be the  algebra generated by all the system operators $S_\alpha$ (all sums and products of such operators). 

\begin{thm}[\cite{Knill:2000dq}]
\label{th:rep}
Assume that $\mathcal{A}$ is $\dagger$-closed (i.e. $A\in\mathcal{A}\Rightarrow A^{\dagger}\in\mathcal{A}$) and that $I\in\mathcal{A}$. Then
\beq
\mathcal{A}\cong\bigoplus_J I_{n_J}\otimes {\cal{M}}_{d_J}(\mathbb{C}).
\label{eq:A-decomp}
\eeq
The system Hilbert space can be decomposed as
\beq
\mathcal{H}_S=\bigoplus_J \mathbb{C}^{n_J}\otimes\mathbb{C}^{d_J},
\label{eq:decomp}
\eeq
Consequently the subsystem factors $\mathbb{C}^{n_J}$'s are unaffected by decoherence.
\end{thm}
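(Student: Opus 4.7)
My plan is to recognize the statement as the Wedderburn--Artin structure theorem for finite-dimensional $C^*$-algebras, cast in a form tailored to the system-bath coupling problem. Because $\mathcal{A}\subset B(\mathcal{H}_S)$ is finite-dimensional, associative, unital, and $\dagger$-closed, it is a finite-dimensional $C^*$-algebra. I would first use its center to split $\mathcal{H}_S$ into invariant blocks, and then use Schur's lemma on each block to separate the gauge (multiplicity) factor $\mathbb{C}^{n_J}$ from the representation factor $\mathbb{C}^{d_J}$.

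For the block decomposition, consider the center $Z(\mathcal{A})\equiv \mathcal{A}\cap \mathcal{A}'$, where $\mathcal{A}'=\{X\in B(\mathcal{H}_S):[X,A]=0\ \forall A\in\mathcal{A}\}$ is the commutant. Since $\mathcal{A}$ is $\dagger$-closed and unital, so is $Z(\mathcal{A})$; as a commutative, finite-dimensional, unital $C^*$-algebra it admits a unique complete family of mutually orthogonal minimal projections $\{P_J\}_J$ with $\sum_J P_J=I$. Because each $P_J\in \mathcal{A}'$, it commutes with every element of $\mathcal{A}$, so one obtains $\mathcal{H}_S=\bigoplus_J \mathcal{H}_J$ with $\mathcal{H}_J\equiv P_J\mathcal{H}_S$, and $\mathcal{A}=\bigoplus_J \mathcal{A}_J$ with $\mathcal{A}_J\equiv P_J\mathcal{A}P_J\subset B(\mathcal{H}_J)$. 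Minimality of $P_J$ inside $Z(\mathcal{A})$ forces the center of $\mathcal{A}_J$ to be $\mathbb{C}\,P_J$, i.e., $\mathcal{A}_J$ is a factor.

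For the tensor structure within a single block, observe that a $\dagger$-closed factor acting on a finite-dimensional Hilbert space is, by Schur's lemma, a direct sum of copies of a single irreducible representation: all minimal projections of $\mathcal{A}_J$ are mutually unitarily equivalent, so the isotypic decomposition of $\mathcal{H}_J$ consists of $n_J$ copies of one irrep of dimension $d_J$. Choosing a basis that separates the multiplicity label from the representation label yields an isomorphism $\mathcal{H}_J\cong \mathbb{C}^{n_J}\otimes \mathbb{C}^{d_J}$ under which $\mathcal{A}_J$ acts as $I_{n_J}\otimes \mathcal{M}_{d_J}(\mathbb{C})$; this establishes \eqref{eq:A-decomp} and \eqref{eq:decomp} simultaneously.

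The noiselessness claim is then immediate. Every $S_\alpha\in \mathcal{A}$, restricted to block $J$, takes the form $I_{n_J}\otimes s^{(J)}_\alpha$ for some $s^{(J)}_\alpha\in \mathcal{M}_{d_J}(\mathbb{C})$, so $H_{SB}\bigl|_{\mathcal{H}_J\otimes \mathcal{H}_B}=I_{n_J}\otimes \bigl(\sum_\alpha s^{(J)}_\alpha \otimes B_\alpha\bigr)$, acting trivially on the left tensor factor; hence any state encoded in a single $\mathbb{C}^{n_J}$ is untouched by the system-bath coupling. The main technical obstacle is the factor step: proving that a unital $\dagger$-closed matrix subalgebra with trivial center is unitarily equivalent to $I_n\otimes \mathcal{M}_d(\mathbb{C})$. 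The cleanest route I would take is to pick any minimal projection $e\in \mathcal{A}_J$, show that minimal projections in a factor are all mutually equivalent and hence share a common rank $n_J$, and then construct a complete set of matrix units connecting them to realize the tensor decomposition explicitly.
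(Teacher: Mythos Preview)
Your proposal is a correct and standard route to the Wedderburn--Artin decomposition for a finite-dimensional unital $*$-algebra acting on $\mathcal{H}_S$: split via the minimal central projections, then use Schur's lemma (equivalently, the isotypic decomposition) inside each factor to exhibit the tensor structure $I_{n_J}\otimes \mathcal{M}_{d_J}(\mathbb{C})$. The final noiselessness claim follows exactly as you say.

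However, there is nothing to compare against: the paper does \emph{not} prove Theorem~\ref{th:rep}. It is stated as a quoted result from Ref.~\cite{Knill:2000dq} and then used as a black box; the surrounding text explains the meaning of the decomposition and checks the hypotheses (unitality and $\dagger$-closure can always be arranged), but no argument for \eqref{eq:A-decomp}--\eqref{eq:decomp} is given. So your write-up actually goes well beyond what the paper provides. If you want to match the paper's level, a one-line appeal to the structure theorem for finite-dimensional $C^*$-algebras (or to \cite{Knill:2000dq}) suffices; if you want a self-contained proof, what you have sketched is the right plan, and the only place where you should be careful is the factor step you flagged yourself---namely, showing that in a $*$-closed simple matrix subalgebra all minimal projections are Murray--von Neumann equivalent, so that a full system of matrix units can be built.
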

Here ${\cal{M}}_{d}(\mathbb{C})$ denotes the algebra of complex-valued $d\times d$ irreducible matrices $\{M_{d}(\mathbb{C})\}$, while as usual $I$ is the identity matrix. The number $J$ is the label of an irreducible representation (irrep) of $\mathcal{A}$, $n_J$ is the degeneracy of the $J$th irrep, and $d_J$ is the dimension of the $J$th irrep. Irreducibility means that the matrices $\{M_{d}(\mathbb{C})\}$ cannot be further block-diagonalized.

Each left factor $\mathbb{C}^{n_J}$ is called a ``subsystem" and the corresponding right factor $\mathbb{C}^{d_J}$ is called a ``gauge". Their tensor product forms a proper subspace of the system Hilbert space.

Note that the central conclusion of Theorem \ref{th:rep}, that it is possible to safely store quantum information in each of the left factors, or ``subsystems" $\mathbb{C}^{n_J}$, is a direct consequence of the fact that every term in  $\mathcal{A}$ acts trivially (as the identity operator) on these subsystem factors. 
These components $\mathbb{C}^{n_J}$ are called \emph{noiseless subsystems} (NS).

\emph{The DFS case arises when $d_J=1$}: Then $\mathbb{C}^1$ is just a scalar and $\mathbb{C}^{n_J}\otimes \mathbb{C}^1= \mathbb{C}^{n_J} $, i.e., the summand $\mathbb{C}^{n_J}\otimes \mathbb{C}^1$ reduces to a proper subspace.

Also note that it follows immediately from Eq.~\eqref{eq:decomp} that the dimension of the full system Hilbert space $\mathcal{H}_S=(\mathbb{C}^2)^{\otimes N}=\mathbb{C}^{2^N}$  can be decomposed as
\begin{align}
2^N= \sum_J n_J d_J.\label{eq:system_dim}
\end{align}

The technical conditions of the theorem are easy to satisfy. To ensure that $I \in  \mathcal{A}$ just modify the definition of $H_{SB}$ so that it includes also the pure-bath term $I\otimes H_B$. And to ensure that $\mathcal{A}$ is $\dag$-closed we can always redefine the terms in $H_{SB}$, if needed, as follows in terms of new Hermitian operators
\bes
\begin{align}
S'_{\a}\otimes B'_{\a} &\equiv \frac{1}{2}(S_{\a}\otimes B_{\a} + S^\dag_{\a}\otimes B^\dag_{\a}) \\
S''_{\a}\otimes B''_{\a} &\equiv \frac{i}{2}(S_{\a}\otimes B_{\a} - S^\dag_{\a}\otimes B^\dag_{\a}) .
\end{align}
\ees
Then $S_{\a}\otimes B_{\a} = S'_{\a}\otimes B'_{\a} -i S''_{\a}\otimes B''_{\a}$ and by writing $H_{SB}$ in terms of $S'_{\a}\otimes B'_{\a}$ and $S''_{\a}\otimes B''_{\a}$ we have ensured that $\mathcal{A}$ is $\dag$-closed.

As an application of Theorem \ref{th:rep}, we now know that in the right basis (the basis which gives the block-diagonal form \eqref{eq:A-decomp}), every  system operator  $S_\alpha$ has a matrix representation in the form $S_\alpha=\bigoplus_J I_{n_J}\otimes M^{\a}_{d_J}$:
\beq
S_\alpha=
\left[
\begin{array}{cccc}
  Q^{\a}_1 & 0 & 0 & \cdots \\
  0 & Q^{\a}_2 & 0 & \cdots \\
  0 & 0 & Q^{\a}_3 & \cdots \\
  \vdots & \vdots & \vdots & \ddots \\
\end{array}
\right]
\eeq
Each block $Q^{\a}_J=I_{n_J}\otimes M^{\a}_{d_J}$ is $n_J d_J\times n_J d_J$-dimensional, and is of the form
\beq
Q^{\a}_J=
\left[
\begin{array}{cccc}
  M^{\a}_{d_J} & 0 & 0 & \cdots \\
  0 & M^{\a}_{d_J} & 0 & \cdots \\
  0 & 0 & M^{\a}_{d_J} & \cdots \\
  \vdots & \vdots & \vdots & \ddots \\
\end{array}
\right]
\eeq

Since the system-bath interaction $H_{SB}$ is just a weighted sum of the $S_{\a}$ (the weights being the $B_{\a}$), it is also an element of $\mc{A}$, and hence has the same block-diagonal form. The same applies to any function of $H_{SB}$ that can be written in terms of sums and products, so in particular $e^{-it H_{SB}}$. Thus the system-bath unitary evolution operator also has the same block-diagonal form, and its action on the NS factors $\mathbb{C}^{n_J}$ is also trivial, i.e., proportional to the identity operator.

\subsection{Computation over a NS}
As we saw in our study of DFSs, operators that do not commute with $H_{SB}$ will induce transitions outside of the DFS. Thus we had to restrict our attention to system Hamiltonians $H_S$ which preserve the DFS [Eq.~\eqref{eq:HSonGood}]. For the same reason we now consider the \emph{commutant} $\mathcal{A}'$ of $\mathcal{A}$, defined to be the set
\beq
\label{eq:cmmt}
\mathcal{A}'=\{X:[X,A]=0,\forall A\in \mathcal{A} \}.
\eeq
This set also forms a $\dag$-closed algebra and is reducible to, over the same basis as $\mathcal{A}$,
\beq
\label{eq:cmmt2}
\mathcal{A}'\cong\bigoplus_J  \mathcal{M}_{n_J}(\mathbb{C}) \otimes I_{d_J}.
\eeq
These are the logical operations for performing quantum computation: they act non-trivially on the noiseless subsystems $\mathbb{C}^{n_J}$.

\subsection{Example: collective decoherence revisited}
\subsubsection{General structure}
Let's return to the collective decoherence model.
Recall that collective decoherence on $N$ qubits is characterized by the system operators
$
S_\a =\sum_{i=1}^N \sigma_i^\a,
$
for $\a\in\{x,y,z\}$.
In this case, the system space is
\beq
\label{eq:colld1}
    \mathcal{H}_S=\bigoplus_{J=0(1/2)}^{N/2} \mathbb{C}^{n_J}\otimes\mathbb{C}^{d_J},
\eeq
where $J$ labels the total spin, and the sum is from $J=0$ or $J=1/2$ if $N$ is even or odd, respectively.
For a fixed $J$, there are $2J+1$ different eigenvalues of $m_J$, and hence
\beq
\label{eq:colld2}
d_J=2J+1.
\eeq
By using angular momentum addition rules, one can prove that
\begin{align}
\label{eq:colld3}
n_J=\frac{(2J+1)N!}{(N/2+1+J)!(N/2-J)!},
\end{align}
which is equal to the number of paths from the origin to the vertex $(N,J)$ on the Bratteli diagram (Fig.~\ref{fig:full-Bratteli}), and generalizes the DFS dimensionality formula, Eq.~\eqref{eq:dim-DFS}.

We have
\beq
    \mathcal{H}_S^{(N)}=\mathbb{C}^{n_0}\otimes \mathbb{C}^1 \oplus \mathbb{C}^{n_1}\otimes \mathbb{C}^3 \oplus \cdots
\eeq
for $N$ even,
and
\beq
    \mathcal{H}_S^{(N)}=\mathbb{C}^{n_{1/2}}\otimes \mathbb{C}^2 \oplus \mathbb{C}^{n_{3/2}}\otimes \mathbb{C}^4 \oplus \cdots,
\eeq
for $N$ odd. For example, when $N=3$: $
n_{1/2}=\frac{2\cdot 3!}{3!1!}=2$.

\emph{The DFS case arises when $J=0$ (so that $d_J=1$)}: Then $\mathbb{C}^1$ is just a scalar and $\mathbb{C}^{n_0}\otimes \mathbb{C}^1= \mathbb{C}^{n_0} $, i.e., the left (subsystem) factor has a dimension equal to the number of paths and the right (gauge) factor is just a scalar. In this case the summand $\mathbb{C}^{n_0}\otimes \mathbb{C}^1$ reduces to a proper subspace.

The noiseless subsystems corresponding to different values of $J$ for a given $N$ can be computed
by using the addition of angular momentum, as illustrated below.

\subsubsection{The three qubit code for collective decoherence}

The smallest $N$ which encodes one qubit in a noiseless subsystem is $N=3$. In this case,
\beq
    \mathcal{H}_S^{(N=3)}=\mathbb{C}^{2}\otimes \mathbb{C}^2 \oplus \mathbb{C}^{1}\otimes \mathbb{C}^4,
    \label{eq:HN=3}
\eeq
Thus we can encode one qubit in the first factor $\mathbb{C}^2$ of $J=1/2$.
The two paths of $\ket{\bar{0}}$ and $\ket{\bar{1}}$
are respectively {\includegraphics[
natheight=1.266100in,
natwidth=4.266100in,
height=0.192in,
width=0.5803in
]
{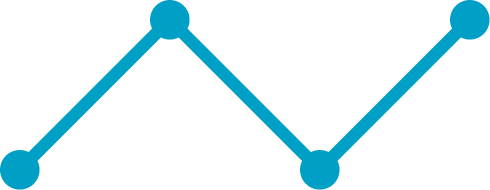}%
} ($\lambda=0$) and
{\includegraphics[
natheight=1.266100in,
natwidth=4.266100in,
height=0.192in,
width=0.5803in
]
{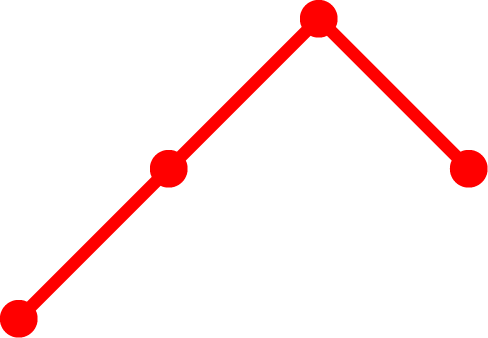}%
} ($\lambda=1$). The end points of these two paths each have two spin projections $m_J=\pm 1/2$ (since they correspond to a total spin $J=1/2$). Using the state notation $\ket{J, \lambda, m_J}$, we thus have
\bes
\label{eq:3code}
\bea
\ket{\bar{0}}=&\a \ket{1/2,0,-1/2}+\b\ket{1/2,0,1/2}= \ket{1/2,0}\otimes ( \a\ket{-1/2}+\b\ket{1/2}),\\
\ket{\bar{1}}=&\a \ket{1/2,1,-1/2}+\b\ket{1/2,1,1/2}= \ket{1/2,1}\otimes ( \a\ket{-1/2}+\b\ket{1/2}),
\eea
\ees
where $\a$ and $\b$ are completely arbitrary. Or using the vector form, we have
\beq
\ket{\bar{0}}=\left(\begin{array}{c}1 \\ 0\end{array} \right)\otimes \left(\begin{array}{c}\a \\ \b\end{array} \right),\quad
\ket{\bar{1}}=\left(\begin{array}{c}0 \\ 1\end{array} \right)\otimes \left(\begin{array}{c}\a \\ \b\end{array} \right).
\eeq
Suppose we want to encode a state $\ket{\psi}=a\ket{0}+b\ket{1}$. The encoded state is
\beq
\ket{\bar{\psi}}=a\ket{\bar{0}}+\b\ket{\bar{1}}\\
=\left(\begin{array}{c}a \\ b\end{array} \right)\otimes \left(\begin{array}{c}\a \\ \b\end{array} \right),
\eeq
where we only care about the encoded information $a$ and $b$. Notice how this last result precisely corresponds to the $\mathbb{C}^{2}\otimes \mathbb{C}^2$ term in Eq.~\eqref{eq:HN=3}. Thus, $\a$ and $\b$ are a ``gauge amplitudes''; their precise values don't matter.

The interaction Hamiltonian restricted to the system $S$  is of the form
\bes
\begin{align}
\left.H_{SB}\right|_S=& \bigoplus_{J=1/2}^{3/2} I_{n_J}\otimes  \mathcal{M}_{d_J}\\
=&I_2\otimes  \mathcal{M}_2\oplus I_1\otimes  \mathcal{M}_4\\
=&\left[\begin{array}{c|c}
I_2\otimes  \mathcal{M}_2&\\
\hline
&  \mathcal{M}_4\\
\end{array}\right].
\end{align}
\ees
What this means is that the term $I_2\otimes  \mathcal{M}_2$ acts on $\ket{\bar{\psi}}$ and leaves its first factor alone (this is good since that's where we store the qubit), but applies some arbitrary matrix $M_2$ to the second factor (we don't care). $ \mathcal{M}_4$ acts on the $\mathbb{C}^{1}\otimes \mathbb{C}^4$ subspace, where we don't store any quantum information.

We can check that the dimensions satisfy Eq.~\eqref{eq:system_dim}:
\begin{align}
\sum_{J=1/2}^{3/2}n_Jd_J= n_{1/2}d_{1/2}+n_{3/2}d_{3/2}=2\cdot 2+1\cdot 4=8=2^3.
\end{align}

Let's now find explicit expressions for the basis state of the three-qubit noiseless subsystem. Recall that $| 0 \rangle = | J = 1/2, m_J = 1/2 \rangle$, $|1 \rangle = | 1/2, -1/2 \rangle$,
the singlet state $\ket{s}=\ket{0,0}=\frac{1}{\sqrt{2}}(\ket{01}-\ket{10})$ and the triplet states are $| t_+ \rangle = | 1, 1 \rangle = | 0 0 \rangle$, $| t_- \rangle = | 1, 1 \rangle = | 1 1 \rangle$, and $| t_0 \rangle = | 1, 0 \rangle = \frac{1}{2}(| 01 \rangle + | 10 \rangle)$.
We now derive the four $J=1/2$ states by using the addition of angular momentum and Clebsch-Gordan coefficients.
\bes
\begin{align}
  \ket{1/2,0,-1/2}=&\ket{s} \otimes \ket{m_3=-1/2}\\
  =&\frac{1}{\sqrt{2}}(\ket{011}-\ket{101})\\
  \ket{1/2,0,1/2}=&\ket{s}\otimes \ket{0}\\
  =&\frac{1}{\sqrt{2}}(\ket{010}-\ket{100})\\
  \ket{1/2,1,-1/2}=&\frac{1}{\sqrt{3}}(\sqrt{2}\ket{J_{12}=1,m_{J_{12}}=-1}\otimes \ket{m_3=1/2}-\ket{J_{12}=1,m_{J_{12}}=0}\otimes \ket{m_3=-1/2})  \\
  =& \frac{1}{\sqrt{6}}(2\ket{110}-\ket{011}-\ket{101})\\
  \ket{1/2,1,1/2}=&\frac{1}{\sqrt{3}}(\ket{J_{12}=1,m_{J_{12}}=0}\otimes \ket{m_3=1/2}-\sqrt{2}\ket{J_{12}=1,m_{J_{12}}=1}\otimes \ket{m_3=-1/2})  \\
  =& \frac{1}{\sqrt{6}}(\ket{010}+\ket{100}-2\ket{001}).
  \end{align}
\ees
These are the basis states that appear in Eq.~\eqref{eq:3code}, so they complete the specification of the three-qubit code.

\subsubsection{Computation over the three-qubit code}
Consider the permutation operator $E_{ij}=\frac{1}{2}(I+\vec{\sigma_i}\cdot \vec{\sigma_j})$
such that $E_{ij}\ket{x}_i\ket{y}_j=\ket{y}_i\ket{x}_j$ for $x,y\in\{0,1\}$.
We have
\bes
\label{eq:3basis}
\begin{align}
E_{12}\ket{1/2,0,-1/2}=&\frac{1}{\sqrt{2}}(-\ket{011}+\ket{101})=-\ket{1/2,0,-1/2}\\
E_{12}\ket{1/2,0,1/2}=&\frac{1}{\sqrt{2}}(\ket{100}-\ket{010})=-\ket{1/2,0,1/2}\\
E_{12}\ket{1/2,1,-1/2}=&\frac{1}{\sqrt{6}}(2\ket{110}-\ket{011}-\ket{101})=\ket{1/2,1,-1/2}\\
E_{12}\ket{1/2,1,1/2}=&\frac{1}{\sqrt{6}}(\ket{010}+\ket{100}-2\ket{001})=\ket{1/2,1,1/2}.
\end{align}
\ees
Thus $E_{12}$ works as a logical $-{\sigma}^z$, in the sense that
\begin{align}
E_{12}=\begin{bmatrix}-1&&&\\&-1&&\\&&1&\\&&&1\end{bmatrix}=-\sigma^z\otimes I = -\bar{\sigma}^z
\end{align}
in the ordered basis of the four $J=1/2$ states given in Eq.~\eqref{eq:3basis}. Again, this agrees with the $\mathbb{C}^{2}\otimes \mathbb{C}^2$ structure of the Hilbert subspace where we store our qubit.

Similarly, one can easily verify that
\begin{align}
\frac{1}{\sqrt{3}}(E_{13}-E_{23})=\sigma^x\otimes I=\bar{\sigma}^x.
\end{align}
Then $\bar{\sigma}^y$ can be obtained from
\begin{align}
2i\bar{\sigma}^y=[\bar{\sigma}^z, \bar{\sigma}^x].
\end{align}

Finding the explicit form of the encoded CNOT is a complicated problem. See \cite{Kempe:2001uq} for a constructive approach using infinitesimal exchange generators, and \cite{DiVincenzo:2000kx} for a numerical approach that yields a finite and small set of exchange-based gates.

\section{Dynamical Decoupling}
\label{sec:DD}

As we saw in the discussion of noiseless subsystems, the error algebra $\mathcal{A}=\{S_\alpha\}$ is isomorphic to
a direct sum of $n_J$ copies of $d_J\times d_J$ complex matrix algebras: 
$\mathcal{A}\cong\bigoplus_J I_{n_J}\otimes \mathcal{M}_{d_J}(\mathbb{C})$,
where $n_J$ is the degeneracy of the $J$th irrep and $d_J$ is the dimension of the $J$th irrep. We can store quantum information in a factor $\mathbb{C}^{n_J}$ when $n_J>1$. However, from general principles (Noether's theorem) we know that degeneracy requires a symmetry, and in our case we would only have $n_J>1$ when the system-bath coupling has some symmetry. When there's no symmetry at all, $n_J = 1$ for all $J$'s, and a DFS or NS may not exist. Starting in this section, we discuss how to ``engineer" the system-bath coupling to have some symmetry.

To sum up, the idea of a DFS/NS is powerful: we can use naturally available symmetries to encode and hide quantum information, and we can compute over the encoded, hidden information. But often such symmetries are imperfect, and we need additional tools to protect quantum information. Such an approach, which adds active intervention to the passive DFS/NS approach, is dynamical decoupling.

\subsection{Decoupling single qubit pure dephasing}

\subsubsection{The ideal pulse case}
Consider a single qubit system with the pure dephasing system-bath coupling Hamiltonian
\beq
H_{SB}=\sigma^z\otimes B^z
\eeq
and system Hamiltonian
\beq
H_S=\lambda(t)\sigma^x.
\eeq
We assume that $\lambda(t)$ is a fully controllable field, e.g., several pulses of a magnetic or electric field applied to the system. Assume these pulses last for a period of time $\delta$, and with strength $\lambda$, and
\beq
\delta\lambda=\frac{\pi}{2}.
\eeq
Assume that at $t=0$, we turn on the pulse for a period of time $\delta$, then let the system and bath interact for a period of time $\tau$, and repeat this procedure, as shown in Fig. ~\ref{Fig:pulse}. In the \emph{ideal} case, $\delta\rightarrow0$ and $\lambda\rightarrow\infty$ while still satisfying $\delta\lambda=\frac{\pi}{2}$, which means the pulses are a series of delta functions. For simplicity we temporarily assume that $H_{B}=0$.
 \begin{figure}[!ht]
		\includegraphics[width=80mm]{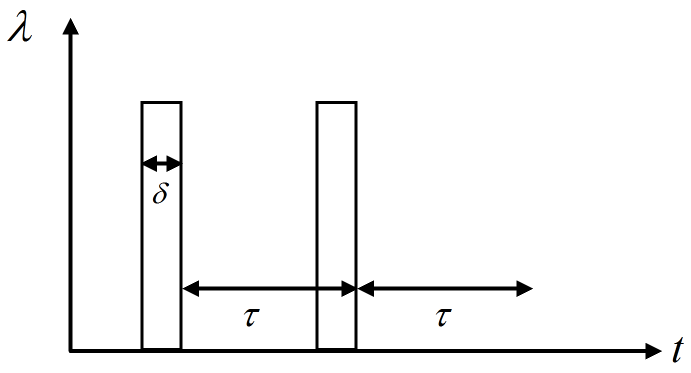}
		\caption{Schematic of a dynamical decoupling pulse sequence. Pulses have width $\d$ and intervals of duration $\tau$. The modulation function $\lambda (t)$ is responsible for switching the pulses on and off.}
		\label{Fig:pulse}
\end{figure}

To formalize this ``ideal pulse'' scenario, let us define the system-bath ``pulse-free" evolution operator $f_\tau$, and the unitary transformation caused by the pulse, $X$, as follows:
\bes
\begin{align}
f_\tau &\equiv e^{-i\tau H_{SB}}.    \label{eq:ftao}\\
     X &\equiv  e^{-i\delta\lambda\sigma^x} \otimes I_B = e^{-i\frac{\pi}{2}\sigma^x} \otimes I_B
        = -i\s^x\otimes I_B.
       \label{eq:Xpulse}
\end{align}
\ees
 In the case of an ideal pulse ($\delta\rightarrow0, \lambda\rightarrow\infty$) there is no system-bath interaction during the time the pulse is turned on, since the duration of the pulse is $0$. Then the joint system-bath evolution operator at time $t=2\tau$ is (dropping overall factors of $i$ and minus signs)
\begin{eqnarray}
Xf_\tau Xf_\tau  =\sigma^x e^{-i\tau H_{SB}}\sigma^x e^{-i\tau H_{SB}} =e^{-i\tau\sigma^x H_{SB} \sigma^x}e^{-i\tau H_{SB}}.
                 \label{eq:DDevolution}
\end{eqnarray}
where in the second equality we used the identity
\beq
Ue^A U^\dagger = e^{UAU^\dagger},
\label{eq:UAU}
\eeq
valid for any operator $A$ and unitary $U$.

On the other hand, since the Pauli matrices are Hermitian and every pair of distinct Pauli matrices anticommutes,
\beq
\{\s_\a,\s_\b\}=0,\quad \a\neq\b,
\label{eq:ac}
\eeq
where the anti-commutator is defined as
\beq
\{A,B\}\equiv AB+BA
\eeq
for any pair of operators $A$ and $B$, it follows that the sign of $H_{SB}$ is flipped:
\bes
\begin{eqnarray}
\sigma^xH_{SB}\sigma^x&=&\sigma^x\sigma^z\sigma^x\otimes B^z \\
&=&-\sigma^z\otimes B^z,\\
&=& - H_{SB}.
                      \label{eq:HSB_Trans}
\end{eqnarray}
\ees
This means that the evolution under $H_{SB}$ has been effectively time-reversed!

Indeed, if we now substitute Eq.~\eqref{eq:HSB_Trans} into Eq.~\eqref{eq:DDevolution} we obtain
\beq
Xf_\tau Xf_\tau = e^{+i\tau H_{SB}}e^{-i\tau H_{SB}} = I.
\eeq
Thus, the bath has no effect on the system at the instant $t=2\tau$. In other words, for a fleeting instant, at $t=2\tau$, the the system is completely decoupled from the bath. Clearly, if we were to repeat Eq.~\eqref{eq:DDevolution} over and over, the system would ``stroboscopically'' decouple from the bath every $2\tau$.

\subsubsection{The real pulse case}

Unfortunately, in the real world, pulses cannot be described by $\delta$ functions, because that would require infinite energy. Generally, the pulse must be described by some continuous function $\lambda(t)$ in the time domain, which may or may not be a pulse. Then, during the period when the pulse is applied to the system, the system-bath Hamiltonian cannot be neglected, so we must take it into account. Keeping the assumption $H_{B}=0$ for the time being, we have to modify the pulse to
\beq
X=e^{-i\delta(\lambda\sigma^x+H_{SB})}.
\eeq
If $\lambda\gg\|H_{SB}\|$ and $\delta\lambda=\pi/2$ (we'll define the norm momentarily), then it's true that $X\approx\sigma^x\otimes I_B$, i.e., we can approximate the ideal pulse case of Eq.~\eqref{eq:Xpulse}. Let us now see how good of an approximation this is.

To deal with the real pulse case, we first recall the Baker-Campell-Hausdorff (BCH) formula (see any advanced book on matrices, e.g., \cite{Bhatia:book}):
\beq
e^{\epsilon(A+B)}=e^{\epsilon A}e^{\epsilon B}e^{(\epsilon^2/2)[A,B]+\mathcal{O}(\epsilon^3)},
\label{eq:BCH}
\eeq
for any pair of operators $A$ and $B$. Now, set $\epsilon=-i\delta$, $A=\lambda\sigma^x$, $B=H_{SB}=\sigma^z\otimes B^z$. Then the real pulse is
\beq
X=\underbrace{e^{-i\delta\lambda\sigma^x}}_{\textrm{ideal pulse}}\underbrace{e^{-i\delta H_{SB}}}_{\textrm{OK}}\underbrace{e^{-\delta^2\lambda[\sigma^x, H_{SB}]/2+\mathcal {O}(\delta^3)}}_{\textrm{does damage}}
\label{eq:realX}
\eeq
The first exponential is just the ideal pulse, and the second is OK as well (we will see that shortly), but the third term will cause the pulse sequence to operate imperfectly.
Let's analyze the pulse sequence subject to this structure of the real pulse.

First, let us define the operator norm \cite{Bhatia:book}:
\beq
\|A\| \equiv \sup_{\ket{\psi}} \frac{\norm{A\ket{\psi}}}{\norm{\ket{\psi}}} = \sup_{\ket{\psi}} \frac{\sqrt{\bra{\psi}A^\dagger A\ket{\psi}}}{\norm{\ket{\psi}}},
\label{eq:op-norm}
\eeq
i.e., the largest singular value of $A$ (the largest eigenvalue of $|A| = \sqrt{A^\dag A}$), which reduces to the absolute value of the largest eigenvalue of $A$ when $A$ is Hermitian. The operator norm is an example of a unitarily invariant (ui) norm: If $U$ and $V$ are unitary, and $A$ is some operator, a norm is said to be unitarily invariant if
\beq
\norm{U A V}_{\textrm{ui}}=\norm{A}_{\textrm{ui}}.
\label{eq:ui-norm}
\eeq
Such norms are submultiplicative over products and distributive over tensor products \cite{Bhatia:book}:
\beq
\|AB\|_{\textrm{ui}} \leq \|A\|_{\textrm{ui}}\|B\|_{\textrm{ui}}, \quad \|A\otimes B\|_{\textrm{ui}} = \|A\|_{\textrm{ui}}\|B\|_{\textrm{ui}}.
\label{eq:ui-norm-sub}
\eeq
Then, using $\|\sigma^{\a}\|=1$ (the eigenvalues of $\s^{\a}$ are $\pm 1$), we have
\beq
\|H_{SB}\|=\|\sigma^x\otimes B^z\|=\|B^z\|.
\eeq
Using this we find
\bes
\bea
\|\delta^2\lambda[\sigma^x, H_{SB}]/2\| &\leq& \delta^2\lambda(\|\sigma^x H_{SB}\|+\|H_{SB}\sigma^x\|)/2 \\
&\leq& (\pi/2)\delta\|\sigma^x\|\|H_{SB}\| \\
&=& \mathcal{O}(\delta\|B^z\|),
\eea
\ees
where we used the triangle inequality, submultiplicativity, and $\delta\lambda=\pi/2$. So, we arrive at the important conclusion that the pulse width should be small compared to the inverse of the system-bath coupling strength, i.e.,
\beq
\delta \ll 1/\|B^z\|,
\eeq
should be satisfied assuming $\|B^z\|$ is finite. This assumption won't always be satisfied (e.g., it does not hold for the spin-boson model), in which case different analysis techniques are required. In particular, operators norms will have to be replaced by correlation functions, which remain finite even when operator norms are formally infinite (see, e.g., \cite{NLP:09}). But, for now we shall simply assume that all operators norms we shall encounter are indeed finite.

Let's Taylor expand the ``damage" term to lowest order:
\bes
\bea
e^{-\delta^2\lambda[\sigma^x, H_{SB}]/2} &=& I -\delta^2\lambda[\sigma^x, H_{SB}]/2 + \mc{O}(\delta^3)\\
&=& I + \mc{O}(\delta\|B^z\|).
\eea
\ees
Putting everything together, including $e^{-i\delta\lambda\sigma^x} = -i\s^x$, the evolution subject to the real pulse is, from Eq.~\eqref{eq:realX} (again dropping overall phase factors)
\bes
\label{eq:fXfXpulse}
\bea
X f_\tau X f_\tau &=& [\sigma^xe^{-i\delta H_{SB}}(I+\mathcal {O}(\delta\|B^z\|))]e^{-i\tau H_{SB}}[\sigma^xe^{-i\delta H_{SB}}(I+\mathcal {O}(\delta\|B^z\|))] e^{-i\tau H_{SB}}\\
                  &=&e^{-i(\tau+\delta)\sigma^x H_{SB}\sigma^x}e^{-i(\tau+\delta)H_{SB}}+\mathcal{O}(\delta\|B^z\|)\\
                  &=&e^{i(\tau+\delta) H_{SB}}e^{-i(\tau+\delta)H_{SB}}+\mathcal{O}(\delta\|B^z\|)\\
                  &=& I+\mc{O}(\delta\|B^z\|),
\eea
\ees
so we see that the real pulse sequence has a first order pulse width correction.

Now let us recall that in fact $H_B\neq0$. How does this impact the analysis? Both the free evolution and the pulse actually include $H_B$:
\bes
\begin{eqnarray}
f_\tau&=&e^{-i\tau(H_{SB}+H_{B})},\label{eq:fHB}\\
X&=&e^{-i\delta(\lambda\sigma^x+H_{SB}+H_B)},
\label{eq:XHB}
\end{eqnarray}
\ees
so we need $\lambda \gg \|H_{SB}+H_B\|$. Set $H_{SB}^\prime = H_{B}+H_{SB}$, and note that the ideal pulse commutes with $H_B$, so that
\beq
\sigma^x(H_{SB}+H_B)\sigma^x=-H_{SB}+H_B.
\eeq
 Substituting Eqs.~\eqref{eq:fHB} and \eqref{eq:XHB} into Eq.~\eqref{eq:fXfXpulse} we then have:
\bes
\label{eq:FXFXHB}
\bea
Xf_\tau X f_\tau &=& e^{-i(\tau+\delta)\sigma^x H_{SB}^\prime \sigma^x}e^{-i(\tau+\delta)H_{SB}^\prime}
+\mathcal{O}(\delta\|H_{SB}^\prime\|)\\
                  &=&e^{-i(\tau+\delta)(-H_{SB}+H_B)}e^{-i(\tau+\delta)(H_{SB}+H_{B})}+\mathcal{O}(\delta\|H_{SB}^\prime\|)
\eea
\ees
Setting $A=H_{SB}+H_B$, $B= -H_{SB}+H_B$, and using the BCH formula \eqref{eq:BCH} again in the form $e^{\epsilon A}e^{\epsilon B}=e^{\epsilon(A+B)}e^{-(\epsilon^2/2)[A,B]+\mathcal{O}(\epsilon^3)}$, we have $A+B = 2H_B$ and $\|[A,B]\|/2 \leq \|H_B-H_{SB}\|\|H_{B}+H_{SB}\| \leq (\|H_{SB}\|+\|H_B\|)^2$, so that Eq.~\eqref{eq:FXFXHB} reduces to
\beq
Xf_\tau X f_\tau =I_S\otimes e^{-2i(\tau+\delta) H_B}+\mathcal{O}[(\tau+\delta)^2(\|H_{SB}\|+\|H_B\|)^2]+\mathcal{O}[\delta (\|H_{SB}\|+\|H_B\|)].
\eeq
Assuming that the pulses are very narrow, i.e., $\d \ll \tau$ (recall that we anyhow need this for ideal pulses), we can neglect $\d$ relative to $\tau$ in the second term, and so the smallness conditions are
\beq
\d \ll \tau \ll 1/(\|B^z\|+\|H_B\|),
\eeq
which replaces the earlier $\delta \ll 1/\|B^z\|$ condition we derived when we ignored $H_B$.

\subsection{Decoupling single qubit general decoherence}

Let us now consider the most general 1-qubit system-bath coupling Hamiltonian
\beq
H_{SB}=\sum_{\alpha=x,y,z}\sigma^\alpha\otimes B^\alpha .
\eeq
Using the anticommutation condition Eq.~\eqref{eq:ac} we have
\beq
\sigma^xH_{SB}\sigma^x = \sigma^x\otimes B^x-\sigma^y\otimes B^y-\sigma^z\otimes B^z,
\eeq
so that the $Xf_\tau X f_\tau$ pulse sequence should cancel both the $y$ and $z$ contributions. The remaining problem is how to deal with the $\sigma^x$ term in $H_{SB}$.

Let us assume that the pulses are ideal ($\d=0$). We can remove the remaining $\sigma^x$ term by inserting the sequence for pure dephasing into a second pulse sequence, designed to remove the $\s_x$ term. This kind of recursive construction is very powerful, and we will see it again in Section~\ref{CDD}.

Let the free evolution again be
\beq
f_\tau = e^{-i\tau H_{SB}}.
\eeq
Then, after applying an $X$-type sequence,
\beq
X f_{2\tau}^\prime\equiv f_\tau X f_\tau = e^{-i2\tau (\sigma^x\otimes B^x + H_B)} + \mc{O}(\tau^2).
\eeq
To remove the remaining $\sigma^x\otimes B^x$ we can apply a $Y$-type sequence to $f_{2\tau}^\prime$:
\bes
\label{eq:XY4}
\bea
f_{4\tau}^{\prime\prime} &=&Y f_{2\tau}^\prime Yf_{2\tau}^\prime \\
&=& Y Xf_\tau Xf_\tau  Y X f_\tau X f_\tau \\
                 &=&Z f_\tau Xf_\tau  Z f_\tau X f_\tau .
\eea
\ees
where as usual we dropped overall phase factors. Clearly,
\beq
f_{4\tau}^{\prime\prime} = e^{-i4\tau H_B} + \mc{O}(\tau^2),
\label{eq:f4tau}
\eeq
so that at $t=4\tau$ the system is completely decoupled from the bath. This pulse sequence is shown in Fig.~\ref{Fig:pulsesequence}, and is the \emph{universal decoupling sequence} (for a single qubit), since it removes a general system-bath interaction.

\begin{figure}[b]
		\includegraphics[width=140mm]{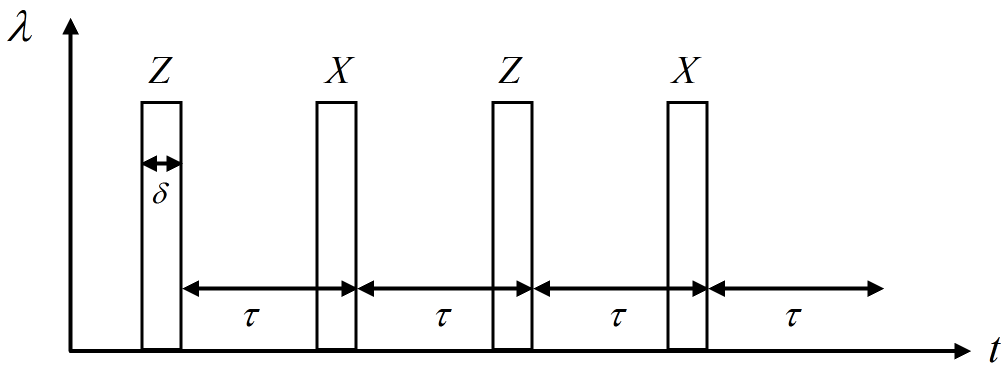}
		\caption{Schematic of the pulse sequence used to suppress general single-qubit decoherence. This pulse sequence is sometimes called XY-4, or the universal decoupling sequence.}
\label{Fig:pulsesequence}
\end{figure}

\section{Dynamical decoupling as symmetrization}
\label{sec:DD-symm}

We saw in Eq.~\eqref{eq:XY4} that the universal decoupling sequence $Z f_\tau Xf_\tau  Z f_\tau X f_\tau $ decouples a single qubit from an arbitrary bath (to first order). We constructed this sequence using a recursive scheme. In this section we would like to adopt a different perspective, which will help us generalize the theory beyond the single qubit case. This perspective is based on symmetrization \cite{Zanardi:1999fk}.

Up to a global phase we have
\beq
\label{eq:sym_sch}
\opZ f_\tau \opX f_\tau \opZ f_\tau \opX  f_\tau = \lp \opZ f_\tau \opZ \rp \lp \opY f_\tau \opY \rp \lp \opX f_\tau \opX \rp \lp \opI f_\tau \opI \rp .
\eeq
On the right hand side of \eqref{eq:sym_sch} we see a clear structure: we are ``cycling'' over the group formed by the elements $\{I,X,Y,Z\}$. Note that because we are not concerned with global phases, this is not the Pauli group, which is the $16$-element group $\{\pm I,\pm X,\pm Y,\pm Z,\pm iI,\pm iX,\pm iY,\pm iZ\}$). Rather, the four element group is the abelian Klein group, whose multiplication table is given by
\begin{center}
\begin{tabular}{|c|c|c|c|c|}
  \hline
  $\times$ & \opI & \opX & \opY & \opZ \\
  \hline
  \opI & \opI & \opX & \opY & \opZ \\
  \opX & \opX & \opI & \opZ & \opY \\
  \opY & \opY & \opZ & \opI & \opX\\
  \opZ & \opZ & \opY & \opX & \opI \\
  \hline
\end{tabular}
\end{center}

Returning to the decoupling discussion, to see why the sequence in Eq.~\eqref{eq:sym_sch} works, note that if we let
$\opA^\alpha=\sigma^\alpha \otimes \opB^\alpha$
we have
\bea
\opI f_\tau \opI  &=& f_\tau = \ee^{-\ii\tau\lp \opA^x + \opA^y + \opA^z  + \opH_{B} \rp}\\
\opX f_\tau \opX  &=& \ee^{-\ii\tau\s^x\opH\s^x} = \ee^{-\ii\tau\lp \opA^x - \opA^y - \opA^z  + \opH_{B}\rp}\\
\opY f_\tau \opY  &=& \ee^{-\ii\tau\s^y\opH\s^y} = \ee^{-\ii\tau\lp -\opA^x + \opA^y - \opA^z  + \opH_{B}\rp}\\
\opZ f_\tau \opZ  &=& \ee^{-\ii\tau\s^z\opH\s^z} = \ee^{-\ii\tau\lp -\opA^x - \opA^y + \opA^z  + \opH_{B}\rp}
\eea
Using the BCH expansion \eqref{eq:BCH} again,
we see that when we add all four of the exponents they cancel all $A^\a$ terms perfectly, so that the right hand side of \eqref{eq:sym_sch} is just
\beq
\lp \opZ f_\tau \opZ \rp \lp \opY f_\tau \opY \rp \lp \opX f_\tau \opX \rp \lp \opI f_\tau \opI \rp
= \ee^{-4\ii\tau H_B} + \mc{O}(\tau^2),
\label{eq:ZXZX}
\eeq
just like in Eq.~\eqref{eq:f4tau}. This is the first order decoupling we were looking for.

From the right hand side of Eq.~\eqref{eq:sym_sch}, we also gain some intuition as to what our
strategy should be beyond the single qubit case.
Again, define
\beq
f_\tau = \exp[-\ii\tau\lp\opH_{SB} + \opH_B\rp],
\eeq
where now $H_{SB}$ and $H_B$ are completely general system-bath and pure-bath operators.
Generalizing from Eq.~\eqref{eq:sym_sch},
consider a group
\beq
\mathcal{G} = \{g_0, \cdots, g_K\}
\eeq
(with $g_0 \equiv \opI$) of unitary transformations $g_j$ acting purely on
the system. Assuming that each such pulse $g_j$ is effectively instantaneous, the pulse sequence shall consist of a full cycle over the group, lasting total time
\beq
T = \lp K+1 \rp\tau.
\eeq
More specifically, we apply the following \emph{symmetrization sequence}:
\bes
\begin{align}
\label{eq:cycle}
\opU (T) &= \prod_{j=0}^{K} g_j^\dagger f_\tau g_j\\
         &= \prod_{j=0}^K \ee^{-\ii\tau\lp g_j^\dagger \opH_{SB} g_j  + \opH_B \rp}\\
         &= e^{-i\tau\lp\sum_{j=0}^K g_j^\dagger \opH_{SB} g_j +(K+1)H_B\rp}+ \mc{O}\lp T^2 \rp\\
         &= \ee^{-\ii T \lp H_{SB}' + H_B \rp} + \mc{O}\lp T^2 \rp, 
         \label{fullunitary}
\end{align}
\ees
where we used Eq.~\eqref{eq:UAU} in the second equality, the BCH formula in the third, and defined the effective, or \emph{average Hamiltonian}
\beq
\opH_{SB}' = \frac{1}{K+1} \sum_{j=0}^{K} g_j^\dagger \opH_{SB} g_j. 
\label{Hsb}
\eeq
Thus the effect of the pulse sequence defined by $\mc{G}$ is to transform the original $H_{SB}$ into the \emph{group-averaged} $H_{SB}'$. If we can choose the \emph{decoupling group} $\mc{G}$ so that $H_{SB}'$ is harmless, we will have achieved our decoupling goal.

Thus, our strategy for general first order decoupling could be one of following:
\begin{enumerate}
  \item Pick a group $\mathcal{G}$ such that $\opH_{SB}'=0$.
  \item Pick a group $\mathcal{G}$ such that
      $\opH_{SB}'=\opI_S\otimes \opB'$.
\end{enumerate}

The first of these is precisely what we saw for decoupling a single qubit using the Pauli (or Klein) group, i.e., Eq.~\eqref{eq:ZXZX}. To see when we can achieve the second strategy (which obviously included the first as a special case with $B'=0$), note that $H_{SB}'$ belongs to the centralizer of the group $\mc{G}$, i.e.
\beq
      H_{SB} \overset{\mc{G}}\longmapsto \opH_{SB}'\in Z(\mc{G}) \equiv \left\{ A \left| [A,g]=0 \; \forall g\in\mathcal{G}\right.\right\}.
\eeq
To prove this we only need to show that $g^{\dagger}\opH_{SB}'g = \opH_{SB}'$ for all $g
\in\mathcal{G}$, since this immediately implies that $[H_{SB}',g]=0$ $\forall g\in\mc{G}$. Indeed,
\bes
\label{eq:proj-cent}
\bea
g^\dagger\opH_{SB}g &=& \frac{1}{K+1} \sum_{j=0}^{K} g^{\dagger} g_j^\dagger \opH_{SB} g_j g\\
&=& \frac{1}{K+1} \sum_{j=0}^{K} \lp g_j g \rp^\dagger \opH_{SB} \lp g_j g\rp \\
&=& \opH_{SB}',
\eea
\ees
since by group closure $\{g_j g\}_{j=0}^K$ also covers all of $\mc{G}$.

The fact that $\opH_{SB}'$ commutes with everything in $\mc{G}$ means that we can apply Schur's Lemma \cite{Hammermesh:grouptheorybook}:
\begin{mylemma}[Schur's Lemma]
\label{lem:schur}
Let $\mathcal{G}=\left\{ g_i \right\}$ be a group. Let $\mathrm{T}(\mathcal{G})$ be an irreducible $d$-dimensional
representation of $\mathcal{G}$ (i.e. not all of the $\mathrm{T}(g_i)$ are similar to a block-diagonal matrix).
If there is a $d\times d$ matrix $A$ such that $\left[A,g_i\right] = 0\; \forall g_i \in \mathcal{G} $, then
$A \propto {I}$.
\end{mylemma}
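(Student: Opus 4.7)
The plan is to exploit the interplay between the commutant condition and irreducibility by studying an eigenspace of $A$. The key observation is that any eigenspace of an operator that commutes with every $T(g_i)$ must be an invariant subspace of the representation, and irreducibility then forces this subspace to be either trivial or the whole ambient space.

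First, I would note that we are working over $\mathbb{C}$ and $A$ is a $d \times d$ complex matrix, so by the fundamental theorem of algebra its characteristic polynomial has at least one root $\lambda \in \mathbb{C}$. Define the (nonzero) eigenspace
\begin{equation}
V_\lambda \equiv \ker(A - \lambda I) \subseteq \mathbb{C}^d .
\end{equation}
Next, I would show that $V_\lambda$ is $T(\mathcal{G})$-invariant. Since $[A, T(g_i)] = 0$ for every $g_i \in \mathcal{G}$, we also have $[A - \lambda I, T(g_i)] = 0$, and for any $v \in V_\lambda$,
\begin{equation}
(A - \lambda I)\, T(g_i) v \;=\; T(g_i)\, (A - \lambda I) v \;=\; 0,
\end{equation}
so $T(g_i) v \in V_\lambda$ for all $g_i$. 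Hence $V_\lambda$ is an invariant subspace of the representation $T$.

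Now I would invoke irreducibility: because $T(\mathcal{G})$ is irreducible, the only $T$-invariant subspaces of $\mathbb{C}^d$ are $\{0\}$ and $\mathbb{C}^d$. Since $V_\lambda$ contains at least one eigenvector and is therefore nonzero, we must have $V_\lambda = \mathbb{C}^d$. This means $A v = \lambda v$ for every $v \in \mathbb{C}^d$, i.e., $A = \lambda I \propto I$, which is the desired conclusion.

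I do not anticipate a serious technical obstacle; the argument is entirely structural. The only points requiring care are (i) ensuring the ground field is $\mathbb{C}$ so that existence of an eigenvalue is automatic (over $\mathbb{R}$ the statement would need modification), and (ii) using the precise definition of irreducibility given in the excerpt, namely that the matrices $T(g_i)$ cannot be simultaneously block-diagonalized, which is equivalent to the absence of a nontrivial proper invariant subspace — exactly what is needed to kill $V_\lambda$ as a proper nonzero subspace.
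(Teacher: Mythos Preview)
Your argument is correct and is the standard textbook proof of Schur's Lemma over $\mathbb{C}$: pick an eigenvalue, show the eigenspace is invariant, and invoke irreducibility to conclude it is everything. The paper itself does not supply a proof of this lemma; it simply states it and cites a group theory text, so there is nothing to compare against beyond noting that your approach is the classical one such a reference would contain.
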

Thus, it follows from this lemma that, provided we pick $\mc{G}$ so that its matrix representation over the relevant system Hilbert space is irreducible, then indeed $\opH_{SB}' \propto I_S$, since it already commutes with every element of $\mc{G}$.

For example, $\mathcal{H}_S = \lp \C^2 \rp ^{\otimes n} = \C^{2^n}$ for $n$ qubits; the dimension of the irrep should then be $2^n$ in this case. 
Which decoupling group has a $2^n$-dimensional irrep over $\lp \C^2 \rp ^{\otimes n}$? An example is the $n$-fold tensor product of the Pauli group: $\mc{G} = \pm,\pm i \{I\otimes \cdots \otimes I,X\otimes I\otimes \cdots \otimes I,\dots, Z\otimes \cdots \otimes Z\}$. And indeed, this decoupling group suffices to decouple the most general system-bath
Hamiltonian in the case of $n$ qubits:
\beq
H_{SB} = \sum_\a \s_1^{\a_1}\otimes \cdots \otimes \s_n^{\a_n} \otimes B^{\a},
\label{eq:HSB-gen}
\eeq
where $\a = \{\a_1,\dots,\a_n\}$, and $\a_i \in \{0,x,y,z\}$, with the convention that $\s^0=I$. Fortunately, such a system-bath interaction is completely unrealistic, since it involves $n+1$-body interactions. ``Fortunately,'' since the decoupling group we just wrote down has $K-1=4^n$ elements, so that the time it would take to apply just once symmetrization sequence \eqref{eq:cycle} grows exponentially with the number of qubits, and we would only achieve first order decoupling (there is still a correction term proportional to $T^2$).

Actually, this approach using Schur's lemma is a bit too blunt. We have already seen that the Pauli group is too much even for a single qubit; the Pauli group has $16$ elements, but the $4$-element Klein group already suffices. Clearly, the approach suggested by Schur's lemma (looking for a group with a $2^n$-dimensional irrep) is sufficient but not necessary.
Moreover, as we shall see, it is possible to drastically reduce the required resources for decoupling, for example by combining decoupling with DFS encoding, or by focusing on more reasonable models of system-bath interactions.

\section{Combining dynamical decoupling with DFS}
\label{sec:DD-DFS}

We saw that to decouple the general system-bath interaction
$H_{SB}$ in Eq.~\eqref{eq:HSB-gen} would require a group with an exponentially large number of elements. This is not only impractical, it might also destroy any benefit we would hope to get from efficient quantum algorithms. Therefore we now consider  ways to shorten the decoupling sequence. As we'll see, this is possible, at the expense of of using more qubits. There will thus be a space-time tradeoff. For an entry into the original literature on this topic see Ref.~\cite{Byrd:2002:047901}.

\subsection{Dephasing on two qubits: a hybrid DFS-DD approach}

Consider a system consisting of two qubits that are coupled to a bath
by the dephasing interaction
\begin{equation}
H_{SB}=\sigma_{1}^{z}\otimes B_{1}^{z}+\sigma_{2}^{z}\otimes B_{2}^{z}.
\end{equation}
This
Hamiltonian is not invariant under swapping the two qubits
since they couple to different bath operators. To make this more apparent,
rewrite the interaction as
\begin{equation}
H_{SB}=\left(\frac{\sigma_{1}^{z}-\sigma_{2}^{z}}{2}\right)\otimes B_{-}+\left(\frac{\sigma_{1}^{z}+\sigma_{2}^{z}}{2}\right)\otimes B_{+}
\end{equation}
where the redefined bath operators are $B_{\pm}=B_{1}^{z}\pm B_{2}^{z}$.
We find that $\left(\frac{\sigma_{1}^{z}+\sigma_{2}^{z}}{2}\right)$
is a ``collective dephasing'' operator that applies the same dephasing
to both qubits, while the ``differential dephasing" operator $\left(\frac{\sigma_{1}^{z}-\sigma_{2}^{z}}{2}\right)$
applies opposite dephasing to the two qubits. From our DFS studies we already know that we can encode a single logical qubit as $|\bar{0}\rangle=|01\rangle$ and $|\bar{1}\rangle=|10\rangle$, just as in Eq.~\eqref{eq:DFS0110}.
Having chosen a basis that vanishes under the effect of
one part of the interaction hamiltonian, this effectively reduces the
interaction to
\begin{equation}
H_{SB}|_{\textrm{DFS}}=\lp\frac{\sigma_{1}^{z}-\sigma_{2}^{z}}{2}\rp\otimes B_{-}=\bar{\s}^z\otimes B_{-}\end{equation}

If the initial interaction had been symmetric, choosing the DFS would have
reduced it to zero. However, the interaction was not symmetric in
this case, and we are left with the above differential dephasing term. We notice further
that the residual term is the same as a $\bar{\s}^z$, or logical $Z$
operating on the DFS basis [this is a symmetrized version of the logical $Z$ operator in Eq.~\eqref{eq:logXZ}]. We recall that dephasing acting on a single
qubit was decoupled by pulses that implemented the $X$ or $Y$ operators,
and hence expect that the $\bar{\s}^z$ interaction can be decoupled using a $\bar{X}$
or $\bar{Y}$ pulse. We'll use the convention that logical/encoded terms in the Hamiltonian are denoted by $\bar{\s}^\alpha$, while the corresponding unitaries are denoted by $\bar{X}$, $\bar{Y}$, or $\bar{Z}$. Thus
\begin{equation}
\bar{\s}^x=\frac{\s^x_1\s^x_2+\s^y_1\s^y_2}{2}\quad\bar{\s}^y=\frac{\s^y_1\s^x_2-\s^x_1\s^y_2}{2}.
\end{equation}

Restricted to the DFS, the implementation of an $\bar{X}$ pulse using a $\bar{\sigma}^{x}$
is analogous to the implementation of an $X$ pulse by applying $\sigma^{x}$
for an appropriate period of time:
\bes
\label{eq:XX}
\begin{eqnarray}
e^{-i\frac{\pi}{2}\bar{\s}^x} & = & e^{-i\frac{\pi}{4}(\s^x_1\s^x_2+\s^y_1\s^y_2)}\\
 & = & e^{-i\frac{\pi}{4}\s^x_1\s^x_2}e^{-i\frac{\pi}{4}\s^y_1\s^y_2}\quad(\textrm{using }[\s^x_1\s^x_2,\s^y_1\s^y_2]=0)\\
 & = & \frac{1}{\sqrt{2}}[I-i\s^x_1\s^x_2]\frac{1}{\sqrt{2}}[I-i\s^y_1\s^y_2]
 \\
 & = & \frac{1}{2}[I-i\s^x_1\s^x_2-i\s^y_1\s^y_2+\s^z_1\s^z_2] \label{eq:neglectII+ZZ}\\
 & = & -\frac{i}{2}(\s^x_1\s^x_2+\s^y_1\s^y_2)=-i\bar{X},
 \end{eqnarray}
\ees
where the term $I+\s^z_1\s^z_2$  in Eq.~\eqref{eq:neglectII+ZZ} was ignored since it vanishes on the DFS.

Hence, the dynamical decoupling process is effective in the sense that\beq
\bar{X}f_{\tau}\bar{X}f_{\tau}|_{\textrm{DFS}}=\bar{I}\otimes\exp(-2i\tau\tilde{B})+\mathcal{O}[(2\tau)^{2}],
\eeq
where $\tilde{B}$ is a bath operator whose exact form does not matter, since we have obtained a pure-bath operator up to a time $\mathcal{O}[(2\tau)^{2}]$. The notation $\bar{I}$ denotes the identity operator projected to the DFS. What have we learned from this example? That we don't need to remove every term in the system-bath Hamiltonian; instead we can use a DFS encoding along with DD. Next we'll see how this can save us some pulse resources.

\subsection{General decoherence on two qubits: a hybrid DFS-DD approach}

We now consider the most general system-bath Hamiltonian on two qubits:
\begin{equation}
H_{SB}=\sum_{\alpha_{1},\alpha_{2}}(\sigma_{1}^{\alpha_{1}}\otimes\sigma_{2}^{\alpha_{2}})\otimes B^{\alpha_{1}\alpha_{2}},
\end{equation}
where $\alpha_i\in\{0,x,y,z\}$.

Within the framework of the same DFS as earlier (DFS=span$\{|01\rangle,|10\rangle\}$),
we can classify all possible ($4^{2}=16$) system operators as either
\begin{itemize}
\item leaving system states unchanged (i.e., acting as proportional to $\bar{I}$)
\item mapping system states to other states within the DFS; these correspond
to logical operations (these are errors since they occur as a result of interaction with the bath)
\item transitions from the DFS to outside and vice versa (``leakage'')
\end{itemize}
The operators causing these errors and their effects are given in Table~\ref{tab:2DFS}.
\begin{center}
\begin{table}
\begin{tabular}{c|c}
Effect on DFS states & Operators\tabularnewline
\hline\hline
unchanged & $I$,$\s^z_1+\s^z_2$,$\s^z_1\s^z_2$,$\s^x_1\s^x_2-\s^y_1\s^y_2$,$\s^x_1\s^y_2+\s^y_1\s^x_2$\tabularnewline
 & \tabularnewline
\hline
logical op. & $\bar{\sigma}^{z}$,$\bar{\sigma}^{y}$,$\bar{\sigma}^{x}$\tabularnewline
 & \tabularnewline
\hline
leakage & $\s^x_1$,$\s^x_2$,$\s^y_1$,$\s^y_2$,$\s^x_1\s^z_2$,$\s^z_1\s^x_2$,$\s^y_1\s^z_2$,$\s^z_1\s^y_2$\tabularnewline
\end{tabular}
\caption{Classification of all two-qubit error operators on the DFS for collective dephasing.}
\label{tab:2DFS}
\end{table}
\end{center}
For example, $\s^z_1\s^z_2$ acts on $\ket{\bar{0}}=\ket{01}$ and $\ket{\bar{1}}=\ket{10}$ as $-\bar{I}$, while $\s^x_1$ takes both $\ket{\bar{0}}$ and $\ket{\bar{1}}$ out of the DFS, to $\ket{11}$ and $\ket{00}$, respectively.

Along the same lines as single qubit dynamical decoupling, we look for
an operator that anticommutes with all the leakage operators, and an operator that anticommutes with all the logical operators. Performing a calculation very similar to Eq.~\eqref{eq:XX} we find that
\begin{itemize}
\item $\exp(-i\pi\bar{\sigma}^{x}) = Z_1Z_2$ and anticommutes with the \emph{entire} leakage set
\item $\exp(-i\frac{\pi}{2}\bar{\sigma}^{z}) = -i\bar{Z}$ and anticommutes with the
logical error operators $\bar{\s}^x,\bar{\s}^y$.
\end{itemize}
A combination of these operators, along with $\bar{X}$ which we used above, is sufficient to reduce the effect
of the system-bath interaction Hamiltonian to that of a pure bath operator that
acts trivially on the system. First we apply $\bar{X}$ to decouple
the logical error operators $\bar{\s}^z$ and $\bar{\s}^y$, giving us a net unitary evolution
\begin{subequations}
\begin{align}
U_{1}(2\tau)&=\bar{X}f_{\tau}\bar{X}f_{\tau} \\
&=\exp[-2i\tau(\bar{\s}^x\otimes \bar{B}^x+\sum_{j=1}^8 \textrm{leak}_j)\otimes B^{(j)}]+ \mc{O}[(2\tau)^2]\ ,
\end{align}
\end{subequations}
where the sum is over the $8$ leakage operators shown in Table~\ref{tab:2DFS} and $\bar{B}^x$ and $B^{(j)}$ are bath operators. Thus we still have to compensate for the logical $\s^z$ error and the leakage errors. The order in which we do this doesn't matter to first order in $\tau$, so let us remove the leakage errors next. 
This
is accomplished by using a ${ZZ}$
pulse:
\begin{subequations}
\begin{align}
U_{2}(4\tau)&=ZZ\cdot U_{1}(2\tau)\cdot ZZ\cdot U_{1}(2\tau) \\
&=\exp[-4i\tau\bar{\s}^x\otimes \bar{B}^x]+\mc{O}[(4\tau)^2]
\end{align}
\end{subequations}
All that remains now is to remove the logical error operator $\bar{\s}^x$, since it commutes with both the $\bar{X}$ and $ZZ$ pulses we have used so far. This can be performed using $\bar{Z}$, which anticommutes with $\bar{\s}^x$. Hence, the overall time evolution
that compensates for all possible (logical and leakage) errors is
of period $8\tau$ and is of the form
\begin{subequations}
\begin{align}
U_{3}(8\tau)&=\bar{Z}U_{2}(4\tau)\bar{Z}U_{2}(4\tau)\\
&=\bar{I}\otimes e^{-8i\tau B'}+\mc{O}[(8\tau)^2].
\end{align}
\end{subequations}
A qubit encoded into the $|\bar{0}\rangle=|01\rangle$ and $|\bar{1}\rangle=|10\rangle$ DFS is acted on (at time $T=8\tau$) only by the innocuous operators in the first row of Table~\ref{tab:2DFS}. As a result it is completely free of decoherence, up to errors appearing to $\mathcal{O}(T^{2})$, while we used a pulse sequence that has length $8\tau$, shorter by a factor of $2$ compared to the sequence we would have had to use without the DFS encoding (the full two-qubit Pauli or Klein group). This, then, illustrates the space-time tradeoff between using full DD without DFS encoding, \textit{vs} using a hybrid approach, where we use up twice the number of qubits, but gain a factor of two in time. 

However, we could have of course also simply discarded one of the two qubits and used the length-$4$ universal decoupling sequence for a single qubit. In this sense the current example isn't yet evidence of a true advantage. Such an advantage emerges when one considers constraints on which interactions can be controlled. The method we have discussed here essentially requires only an ``XY" type interaction, governed by a Hamiltonian with terms of the form $\s^x\otimes \s^x+\s^y\otimes \s^y$ \cite{Lidar:2001vn}. 

For a discussion of how to generalize the construction we have given here to an arbitrary number of encoded qubits, see \cite{Wu:2002ys}.

\section{Concatenated dynamical decoupling: removing errors of higher order
in time}
\label{CDD}

The dynamical decoupling techniques considered so far have all involved
elimination of decoherence up to first order in time. We now consider
the question of whether it is possible to improve upon these techniques
and remove the effect of noise up to higher orders in time. We saw
in the earlier sections that applying pulses corresponding to the
chosen decoupling group effectively causes a net unitary evolution
\beq
U^{(1)}(T_1)=\prod_{i=0}^{K}g_{i}^{\dagger}U^{(0)}(\tau)g_{i}
\eeq
where $U^{(0)}(\tau)\equiv U_{f}=e^{-iH\tau}$ is the free unitary evolution operator, $T_1 = k \tau$, where we let $k\equiv K+1$ and $\tau \equiv T_0$ is the free evolution duration.

A \emph{concatenated} dynamical decoupling (CDD) sequence is defined recursively for $m\geq 1$ as
\beq
U^{(m)}(T_{m})=\prod_{i=0}^{K}g_{i}^{\dagger}U^{(m-1)}(T_{m-1})g_{i} ,
\eeq
lasting total time 
\beq
T_m = k T_{m-1} = k^m \tau .
\eeq
For example, we could concatenate the universal decoupling sequence $U^{(1)}(T_{1}) = Z f_\tau X f_\tau Z f_\tau X  f_\tau$ [Eq.~\eqref{eq:sym_sch}], where $T_1 = 4\tau$, in this manner. The second order sequence we would obtain is then
\begin{subequations}
\begin{align}
U^{(2)}(T_{2}) &= Z U^{(1)}(T_{1}) X U^{(1)}(T_{1}) Z U^{(1)}(T_{1}) X  U^{(1)}(T_{1})\\
&=Z [Z f_\tau X f_\tau Z f_\tau X  f_\tau] X [Z f_\tau X f_\tau Z f_\tau X  f_\tau] Z [Z f_\tau X f_\tau Z f_\tau X  f_\tau] X  [Z f_\tau X f_\tau Z f_\tau X  f_\tau] \\
&= f_\tau X f_\tau Z f_\tau X  f_\tau Y f_\tau X f_\tau Z f_\tau X  f_\tau f_\tau X f_\tau Z f_\tau X  f_\tau Y f_\tau X f_\tau Z f_\tau X  f_\tau .
\end{align}
\end{subequations}
Note that while some of the pulses have been compressed using equalities such as $Z^2=I$ and (up to a phase) $XZ=Y$, the total duration of the sequence is dictated by the number of free evolution intervals, which is $16$ in this case. Note that while this second order sequence is time-reversal symmetric, this is not a general feature; indeed the first order sequence isn't, nor is the third order sequence, as is easily revealed by writing it down.

To analyze the performance of CDD, we start by rewriting, without loss of generality, $H = H_{SB} + I_S\otimes H_B$ as
\begin{equation}
H\equiv H^{(0)}=H_{C}^{(0)}+H_{NC}^{(0)}
\end{equation}
where $H_{C}^{(0)}$ commutes with the group $\mathcal{G}$ and $H_{NC}^{(0)}$
does not ($H_{C}^{(0)}$ includes $I_S\otimes H_B$ for sure, and maybe also a part of $H_{SB}$). This split is done in anticipation of our considerations below.
We proceed by using the BCH formula, which yields
\begin{subequations}
\begin{align}
U^{(1)}(T_1)&=\exp\left\{-i\tau\sum_{i}g_{i}^{\dagger}H^{(0)} g_{i}+\frac{\tau^{2}}{2}\sum_{i<j}[g_{i}^{\dagger}H^{(0)} g_{i},g_{j}^{\dagger}H^{(0)} g_{j}]+\mathcal{O}(\tau^{3})\right\} \\
&\equiv e^{-i \tau H^{(1)}}
\end{align}
\end{subequations}
where the effective Hamiltonian $H^{(1)}$ can be decomposed as
\begin{subequations}
\begin{align}
\label{eq:H^1}
H^{(1)}  =& H_{C}^{(1)}+\tau H_{NC}^{(1)} + \mathcal{O}(\tau^{2}) \\
H_{C}^{(1)} &\equiv \sum_{i}g_{i}^{\dagger}H^{(0)}g_{i} \\
H_{NC}^{(1)} &\equiv \frac{\tau}{2}\sum_{i<j}[g_{i}^{\dagger}H^{(0)}g_{i},g_{j}^{\dagger}H^{(0)}g_{j}]
\end{align}
\end{subequations}
and where as we have already seen in Eq.~\eqref{eq:proj-cent}, $H_{C}^{(1)}$ lies in the centralizer of $\mathcal{G}$, i.e., $[{H}_{C}^{(1)},g]=0$ $\forall g\in \mathcal{G}$. Thus, the
first order term in the BCH series is harmless. Note the important fact that the bad, non-commuting (with $\mathcal{G}$) term $H_{NC}^{(1)}$ is of $\mathcal{O}(\tau)$, while the good, commuting term $H_{C}^{(1)}$ is of $\mathcal{O}(1)$. 

Next let us see how this plays out in  $U^{(2)}$:
\begin{subequations}
\begin{align}
U^{(2)}(T_2)&=\prod_{i=0}^{K}g_{i}^{\dagger}U^{(1)}(T_1)g_{i}\\
\label{eq:337b}
&=\exp\left\{-i\tau\sum_{i}g_{i}^{\dagger}H^{(1)} g_{i}+\frac{\tau^{2}}{2}\sum_{i<j}[g_{i}^{\dagger}H^{(1)} g_{i},g_{j}^{\dagger}H^{(1)} g_{j}]+\cdots\right\} \\
&\equiv e^{-i \tau H^{(2)}}
\end{align}
\end{subequations}
where the effective Hamiltonian $H^{(2)}$ can be decomposed as
\begin{align}
H^{(2)}  = H_{C}^{(2)}+\tau^2 H_{NC}^{(2)} + \mathcal{O}(\tau^{3}) .
\label{eq:H^2}
\end{align}
It is tempting but slightly counterproductive to try to work out the exact form of the $H_{C}^{(2)}$ and $H_{NC}^{(2)}$ terms. Clearly, $H_{C}^{(2)}$ contains 
both $\sum_{i}g_{i}^{\dagger}H^{(1)}g_{i}$ and the terms involving commutators of only $H_C^{(1)}$ (without any $H_{NC}^{(1)}$). The key point is that the lowest order term in Eq.~\eqref{eq:337b} that does not commute with
$\mathcal{G}$ (and hence is responsible for decoherence), is of the type $\tau^2 \sum_{i<j}[g_{i}^{\dagger}H_C^{(1)}g_{i},g_{j}^{\dagger}H_{NC}^{(1)}g_{j}]$ and hence is of $\mathcal{O}(\tau^{3})$. 

Comparing Eq.~\eqref{eq:H^1} and \eqref{eq:H^2} we see that the order of the non-commuting term increased by $1$, from $\tau$ to $\tau^2$. This will clearly continue as we proceed to higher concatenation levels, so that 
\begin{align}
U^{(m)}(T_m)\equiv e^{-i \tau H^{(m)}},
\end{align}
where
\begin{align}
H^{(m)}  = H_{C}^{(m)}+\tau^m H_{NC}^{(m)} + \mathcal{O}(\tau^{m+1}) .
\end{align}
This is a remarkable result: it tells us that using concatenation we can push the order of the error term $H_{NC}^{(m)}$ to become arbitrarily high. Of course the price is an exponentially growing pulse sequence length, but this is a price that may be worth paying if we can make the error shrink fast enough.

Note, however, that in principle the norm of $H_{NC}^{(m)}$ may grow with $m$ and thus is it not yet obvious at this point that increasing orders of CDD concatenation implies a better performance. Let us show that in fact there is an optimal level of concatenation (see Ref.~\cite{NLP:09} for a rigorous analysis; here we adapt the more intuitive presentation in Ref.~\cite{Khodjasteh:2007zr}).

Given an $m$th level CDD sequence one can define a dimensionless ``error phase",
\beq
\phi_{\textrm{CDD}}(m)= \tau^{m+1} \| H_{NC}^{(m)}\|,
\eeq
Let 
\beq
J\equiv \Vert H_{SB}\Vert, \quad \beta\equiv\Vert H_{B}\Vert .
\eeq
Assuming $J < \b$ and specializing to the universal DD sequence [Eq.~\eqref{eq:sym_sch}] as the base sequence for the concatenation, this error phase can be bounded as \cite[Eq.~(46)]{Khodjasteh:2007zr}
\begin{equation}
\phi_{\textrm{CDD}}(m) \leq T_m(2^{m^{2}}(\beta\tau)^{m}J),
\end{equation}
where $m$ is the degree of concatenation and now $T_m = 4^m \tau$. It follows then that for an $m$-level CDD sequence we have the following two possibilities:
\begin{itemize}
\item Assume that $T_m = T$ is fixed, i.e., $\tau = T/4^m$ can be made arbitrarily small. Then $\phi_{\textrm{CDD}}(m)\leq T (2^{m^{2}}(\beta T /4^{m})^{m}J) = JT (\b T/2^m)^m $, and thus the noise strength decreases monotonically as the concatenation level $m$ increases, as soon as $2^m >\beta T$. In practice, what this implies is that for a fixed $T$, provided $\beta T$ is small enough, the more concatenations of DD the better. However, it is of course not possible in practice to continuously reduce the pulse interval.
\item On the other hand, if $\tau$ has a minimum physically achievable value, such that a higher concatenation level corresponds to a longer pulse sequence, then there is an optimal level of concatenation. To see this, consider how the upper bound $4^m \tau (2^{m^{2}}(\beta\tau)^{m}J)$ behaves as $m$ grows. We see that
\begin{align}
\log\phi_{\textrm{CDD}}(m) \leq \log(2) m^2 + \log(4 \beta \tau)m + \log (J \tau).
\end{align}
This is just a quadratic expression in $m$, a parabola with a minimum. Differentiating with respect to $m$ we easily find that the minimum is at $m_{\textrm{opt}}= - \frac{\log (4 \beta \tau)}{2 \log(2)}$. Thus $m_{\textrm{opt}} > 0$ whenever $T_1\beta = 4\tau \beta < 1$. Thus provided we choose the base pulse sequence length $T_1 < 1/\b$ then it helps to concatenate, up to the level $\lfloor m_{\textrm{opt}} \rfloor$. But for $m$ larger than this the CDD process loses its effectiveness.
\end{itemize}
Concatenation has been tested experimentally and the optimal concatenation level has been observed. See, e.g., \cite{Alvarez:2010ve}.

As a final note, it is possible to obtain the arbitrary-order error suppression using DD without the exponential cost of CDD. For some entries into this literature see, e.g., 
\cite{Uhrig:2007qf,West:2010dq,Wang:10}

\section{Dynamical Decoupling and Representation theory}
\label{sec:DD-rep}

In this section, our goal is to illustrate the connections between first-order DD and the result from representation theory, Theorem \ref{th:rep}, which is a theorem of fundamental importance in the theory of quantum error correction. See Refs.~\cite{Zanardi:1999:77,Viola:1999:4888,Viola:2000:3520} for entries into the original literature on this topic.

\subsection{Information storage and computation under DD}

We define the {\textit{group algebra}} $\C \mc{G}$ of the group $\mc{G}$ over the complex field as: $\C \mc{G} \equiv \{$ all linear combinations of the elements in $\mc{G}$, over $\C \}$. For example, if $\mc{G} = \{ I, \s^x, \s^y, \s^z \}$, the Pauli group, then
\beq
\C \mc{G} = \lb \sum_{\a = 0,x,y,z} a_{\a} \s^{\a} \rb,\quad a_{\a} \in \C.
\eeq

Now consider the group algebra of our decoupling group, $\C\mc{G}$. Clearly it is a matrix algebra of dimension $d \times d$ where $d=\dim(\mc{H}_S)$. We can always choose the group in such a way that $\C \mc{G}$ is $\dgr$-closed. Since every group includes the identity, we can thus invoke Theorem \ref{th:rep}, using which we have:
\beq
\label{eq:grpalb}
\C \mc{G} \cong \bigoplus_{J} I_{n_J} \ox  \mathcal{M}_{d_J},
\eeq
where $J$ is the irrep (irreducible representation) label, $n_J$ is the multiplicity of irrep $J$ and $d_J$ is the dimension of the irrep labelled by $J$. The system Hilbert space $\mc{H}_S$ is correspondingly partitioned into a direct sum of product spaces, which we can write as
\beq
\mc{H}_S \cong \bigoplus_{J} \C^{n_J} \ox \C^{d_J}.
\eeq
Hence, every DD pulse that we apply to the system acts like identity on $\C^{n_J}$ and as some non-trivial operation on $\C^{d_J}$. It's clear that if we store our quantum information in $\C^{n_J}$, then the pulses do not affect it (of course the system-bath interaction can still affect information stored in $\C^{n_J}$).

Now consider the \textit{{commutant}} of the group algebra $\C \mc{G}$ [see Eqs.~\eqref{eq:cmmt} and \eqref{eq:cmmt2}],
\bes
\begin{align}
\C \mc{G}' & \equiv \lb A | [A,\C\mc{G}]=0 \rb \\
& = \bigoplus_{J}  \mathcal{M}_{n_J} \ox I_{d_J} .\label{eq:grpcmmt}
\end{align}
\ees
We can immediately see from this definition and from Eq.~\eqref{Hsb} that, by linearity, the effective system-bath Hamiltonian, $H_{SB}'$, lies in the commutant of the group algebra, i.e., $H_{SB}' \in \C \mc{G}'$. It follows that $H_{SB}'$ can be represented as:
\beq
H_{SB}' = \bigoplus_{J} (H_{SB}')_{n_J} \ox I_{d_J},
\eeq	
i.e., the effective system-bath Hamiltonian has this block-diagonal representation, with blocks labeled by the irrep index $J$, each of dimension $n_J d_J$, and where each non-trivial factor $(H_{SB}')_{n_J}$ is an $n_J\times n_J$ matrix. It is clear that in this case, we can encode our quantum information in $\C^{d_J}$, since $H_{SB}'$ will act as identity on it. Note that it doesn't necessarily matter that the pulses themselves have a non-trivial effect on $\C_{d_J}$: because we know everything about the pulses, they can compensated for by applying appropriate transformations.

However, it is often desirable not to have to compensate for the action of the DD pulses. In that case we might we want to be able to store information in $\C^{n_J}$ rather than $\C^{d_J}$. If so, clearly we need to somehow make the effect of $H_{SB}'$ trivial on $\C^{n_J}$ as well. Since $H_{SB}'$ is determined by our group of pulses, $\mc{G}$, it boils down to choosing the appropriate set of pulses, i.e., picking the group $\mc{G}$ such that
\bes
\begin{align}
H_{SB}' & \in \C \mc{G}' \cap \C \mc{G} \\
& {=} \bigoplus_{J} \lambda_J I_{n_J} \ox I_{d_J}, \text{   } \lambda_J \in \C
\label{eq:intersect}
\end{align}
\ees
where Eq.~\eqref{eq:intersect} can be easily deduced by examining Eqs.~\eqref{eq:grpalb} and \eqref{eq:grpcmmt}. This means that if we pick an appropriate $\mc{G}$, $H_{SB}'$ will have the form:
\beq
H_{SB}'=
\begin{pmatrix}
\lambda_1 I_{n_1 d_1}&&&& \\
&\lambda_2 I_{n_2 d_2}&&& \\
&&\lambda_3 I_{n_3 d_3}&& \\
&&&\ddots& \\
&&&&&
\end{pmatrix}
\eeq

We have already seen examples where $H'_{SB}$ is of this form. E.g.,
\begin{itemize}
\item When the conditions of Schur's Lemma (\ref{lem:schur}) are satisfied then $H_{SB}' \propto I_d$. Thus, that result was a special case resulting from this more general structure.

\item If $\lambda_J = 0$ for all $J$, then $H_{SB}'=0$ and the system bath interaction is annihilated (to first order in $T$). This describes the kind of situation we obtained with the Klein group symmetrization [see Eq.~\eqref{eq:f4tau}].
\end{itemize}

So, now we have established that it is possible to protect information stored both in $\C^{n_J}$ and $\C^{d_J}$. Information stored in the latter is protected since the effective system-bath interaction acts like identity on that space; though we'd have to compensate for the influence of the pulses we would apply. Storing in the former requires us to more cleverly choose the pulse group $\mc{G}$ such that the effective system-bath interaction becomes a block-diagonal matrix, with each block proportional to identity. 

Our choice of information storage location might depend on which of $\max_J n_J$ and $\max_J d_J$ is greater, since that would provide us with a larger-dimensional space and hence more qubits. Or it might depend on whether we wish to apply computation as well, in which case using $\C^{n_J}$ would be preferred. Indeed, when Eq.~\eqref{eq:intersect} is satisfied we can make use of the commutant $\C \mc{G}'$ to perform computation! 

We now discuss some examples which illustrate these points:

\subsection{Examples}

In all of the following examples, our system is a set of $N$ qubits, i.e., ${\mc{H}}_S = (\C^2)^{\ox N}= \C^{2^{N}}$.

\subsubsection{Example 1: $\mc{G} = (\text{SU}(2))^{\ox N}$}
In this case the pulses are products of all $N$ arbitrary single-qubit unitaries, and the only operator which commutes with $\mc{G}$ is the identity operator, i.e., $H_{SB}' \in \C \mc{G}' = \mathbb{C}$. This is the Schur's Lemma situation again. Thus, in this case we have the choice of storing information in either the left-hand ($\C^{n_J}$) or right-hand factor ($\C^{d_J}$), or both if we do not mind compensating for the action of the pulses on the right-hand factor. It is interesting that in spite of the huge cost of directly implementing this group, it can be well approximated by picking the $\text{SU}(2)$ rotation on each qubit at random \cite{Viola:2005:060502}.

\subsubsection{Example 2: $\mc{G} =\text{Collective SU}(2)$}
Let us add to Example 1 the constraint that the same unitary matrix acts on every qubit, but every element of $SU(2)$ is implemented (again, this can be approximated using random elements of $SU(2)$ \cite{Viola:2005:060502}). Then $H_{SB}' \in \C \mc{G}' = \C S_{N}$, where $S_N$ is the permutation group on $N$ elements.
Thus the effective system-bath Hamiltonian is \emph{not} proportional to identity, and instead we have here the case where $H_{SB}'$ acts trivially only on the right-hand factors. Its action on the left-hand factors is to apply permutations.

The collective-SU(2) group is generated by the sum of the Pauli matrices, i.e., $\{\sum_{i} \s^{\a}_i\}_{\a=x,y,z}$. We already encountered these sums in the study of collective decoherence [see Eqs.~\eqref{eq:CD} and \eqref{eq:totspin}]. In this case, our pulse group has the same generating Hamiltonian as the system part of the system-bath interaction in collective decoherence: $H_{SB}^{\text{coll.dec.}} = \sum_{\a=x,y,z} \lp \sum_{i} \s^{\a}_i \rp \ox B^{\a}$. Since the DD group acts as identity on the left-hand factors, and the DD group behaves like the collective decoherence operators, while the effective system-bath interaction $H_{SB}'$ acts like the exchange operators \eqref{eq:Eij} we encountered in our study of computation over the DFS for collective decoherence, we see that the current situation is the reverse (or dual) of the situation back in the DFS case. In other words, we can invoke the machinery we developed then [see Eqs.~\eqref{eq:colld1}-\eqref{eq:colld3}] but we should flip the role of $d_J$ and $n_J$: $\mathcal{H}_S =\bigoplus_{J=0(1/2)}^{N/2} \mathbb{C}^{n_J}\otimes\mathbb{C}^{d_J}$, where now
\bes
\bea
n_J &=&2J+1, \\
d_J &=& \frac{(2J+1)N!}{(N/2+1+J)!(N/2-J)!}.
\eea
\ees
We can always pick an irrep $J$ so that $d_J > n_J$, and so we have here the same code rates as in the case of a DFS.

\subsubsection{Example 3: $\mc{G} = S_n$}
In this case $\mc{G}$ is the permutation group. As we saw in the last example, the permutation group is dual to collective SU(2). So in this case, $H_{SB}'\in\C \mc{G}' =$``collective decoherence". In fact, since the permutation group can be obtained by swaps (or transpositions), we could as well take $\mc{G} = \{ \text{SWAP}_{i,j} \}$. And we know that,
\beq
\text{SWAP}_{i,j} = \ee^{-\ii \pi \vec{\s}_i \cdot \vec{\s}_j /4},
\eeq
which is generated by the Heisenberg interaction, so the decoupling group is implementable in physical systems (such as quantum dots) where the exchange interaction is controllable (see Ref.~\cite{Wu:2002zr} for a discussion of how to use efficiently implement $\mc{G}$ in this case). The Hilbert space again splits as  $\mathcal{H}_S =\bigoplus_{J=0(1/2)}^{N/2} \mathbb{C}^{n_J}\otimes\mathbb{C}^{d_J}$, where now we have the irrep dimension and multiplicity formulas we encountered during the DFS study:
\bes
\bea
n_J &=& \frac{(2J+1)N!}{(N/2+1+J)!(N/2-J)!} \\
d_J &=&2J+1.
\eea
\ees
We have the option of encoding into the right-hand factor, where $H_{SB}'$ acts as identity, but then the space dimension is only $2J+1$. Alternatively, we can encode into the left-hand factors, where the effective system-bath interaction has non-trivial action, but it acts as collective decoherence, so that our DFS encoding will completely hide the quantum information from the action of $H_{SB}'$. This has the significant advantage (over right-hand factor encoding) of providing us with a code space of dimension $n_J$. This example leads us to the interesting conclusion that in this case in fact Eq.~\eqref{eq:intersect} applies, i.e., the effective system-bath interaction acts trivially everywhere.

\subsubsection{Example 4: Linear System-Bath coupling}
We consider a system-bath interaction of the form,
\beq
H_{SB} = \sum_{\a=x,y,z}  \sum_{i} \s^{\a}_i  \ox B^{\a}_i.
\eeq
Each qubit in this case has its own bath. In this noise model, we don't consider bilinear terms in the system such as $\s^{\a}_i\s^{\b}_j \ox B^{\a \b}_{ij}$, because this is a 3-body interaction which is typically much weaker in nature and also very hard to engineer.

The decoupling group we select for this is,
\beq
\mc{G}= \lb I^{\ox N}, X^{\ox N}, Y^{\ox N}, Z^{\ox N} \rb.
\eeq
We choose $N$ to be even. Therefore, $\mc{G}$ becomes abelian. And from representation theory, we know that all the irreps of an abelian group are 1-dimensional (scalars, so $d_J=1$ $\forall J$) and the number of irreps is the order of the group. Here $\abs{\mc{G}} = 4$. The irreps are:
\begin{align}
  \begin{array}{c|cccc}
  J & I^{\ox N} & X^{\ox N} & Y^{\ox N} & Z^{\ox N} \\
  \hline
  1 & 1 & 1 & 1 & 1 \\
  2 & 1 & 1 & -1 & -1 \\
  3 & 1 & -1 & 1 & -1 \\
  4 & 1 & -1 & -1 & 1 \\
  \end{array}
\label{eq:irreps-table}
\end{align}
The group algebra is,
\bes
\begin{align}
\C \mc{G} &\cong \bigoplus_{J} I_{n_J} \ox \mc{M}_{d_J} \\
& = \bigoplus_{J=1}^{4} c_J I_{2^{N-2}},\label{eq:2^N-2}
\end{align}
\ees
where in Eq.~\eqref{eq:2^N-2} $n_J = 2^{N-2}$ because $\sum_{J} n_J d_J = 2^N$ and all the $n_J$'s are equal (by use of a standard multiplicity formula from group theory \cite{Hammermesh:grouptheorybook}, or by noting the symmetry between the $X$, $Y$ and $Z$ operators), we have $4 n_J=2^N$, making $n_J = 2^{N-2}$.

Thus the group algebra can represented as a block diagonal matrix, each block being proportional to an identity of dimension $2^{N-2}$ with proportionality constant $c_J$, i.e., we have
$(N-2)$ qubits in each block that will be unaffected by the system-bath interaction.

Let us pick the first (trivial) irrep to encode our qubits into, i.e., $\{1,1,1,1\}$. In this irrep each pulse acts as $1$, so we're looking for  code-states which are ``stabilized'' by the group (each group element acts as identity). After a bit of thought it is clear that such states are of the form:
\beq
\label{eq:psir}
\ket{\ps_r} \equiv \frac{1}{\sqrt{2}} \lp \ket{r} + \ket{\bar{r}} \rp,
\eeq
where $\bar{r} = \text{NOT}(r)$ and $r\in \{0,1\}^N$ is an \textit{even} weight binary string of $N$ bits, i.e., $r$ contains an even number of $1$'s. Then it is easy to see that the action of any member of the decoupling group $\mc{G}$, leaves $\ket{\ps_r}$ unchanged, so indeed $\ket{\ps_r}$ belongs to the trivial irrep.

Why did we pick this decoupling group? Because it has a couple of very interesting and useful features which we list:

{\textit{Feature 1:}}  We can show that $H_{SB}'=0$.
\begin{proof}
\bes
\begin{align}
H_{SB}' &= \frac{1}{\abs{\mc{G}}} \sum_{j=0}^{\abs{\mc{G}}} g^{\dgr}_j H_{SB}' g_j \\
&= \frac{1}{4} \ls I H_{SB} I + X^{\ox N} H_{SB} X^{\ox N} + Y^{\ox N} H_{SB} Y^{\ox N} + Z^{\ox N} H_{SB} Z^{\ox N} \rs \\
&{=} \frac{1}{4} \Big[  H_{SB}  + \sum_{i} \big(  \s^X_i \ox B^X_i - \s^Y_i \ox B^Y_i - \s^Z_i \ox B^Z_i  \nonumber \\
				& \qquad \qquad \qquad         -\s^X_i \ox B^X_i + \s^Y_i \ox B^Y_i - \s^Z_i \ox B^Z_i  \nonumber \\
				& \qquad \qquad \qquad         -\s^X_i \ox B^X_i - \s^Y_i \ox B^Y_i + \s^Z_i \ox B^Z_i \big) \big] \label{eq:flipsign}\\
&= 0,
\end{align}
\ees
where to arrive at Eq.~\eqref{eq:flipsign} we used the properties of the Pauli group to do the multiplication. For example,
\bes
\begin{align}
X^{\ox N} \s^Y_i X^{\ox N} &= (\s^X_1\ox \dots \ox \s^X_i \ox \dots \ox \s^X_N)(\s^Y_i) (\s^X_1\ox \dots \ox \s^X_i \ox \dots \ox \s^X_N) \\
&= -\s^Y_i,
\end{align}
\ees
and so on.
\end{proof}
Thus, this decoupling group eliminates the system-bath interaction completely, to first order. Recall that this means that we can use the left-hand factor to encode and store information. This allows us to perform computation using the commutant, which has non-trivial action on the left-hand factor.

{\textit{Feature 2:}} We can do computation on the decoherence-protected qubits.

For $N=2$, i.e., for $2$ physical qubits, we have no logical qubits, as the only one state possible according to Eq.~\eqref{eq:psir} is $\ket{\psi_r} = \frac{1}{\sqrt{2}} (\ket{00} + \ket{11})$. This agrees with the fact that, since each irrep provides us with $N-2$ logical qubits, we have zero logical qubits for $2$ physical qubits.

Let us list the possible states in the case of $N=4$, using Eq.~\eqref{eq:psir}
\bes
\begin{align}
\ket{\psi_{0000}}&=\frac{1}{\sqrt{2}} (\ket{0000} + \ket{1111}) \equiv \ket{\bar{0}\bar{0}}, \\
\ket{\psi_{0011}}&=\frac{1}{\sqrt{2}} (\ket{0011} + \ket{1100}) \equiv \ket{\bar{1}\bar{0}}, \\
\ket{\psi_{0101}}&=\frac{1}{\sqrt{2}} (\ket{0101} + \ket{1010}) \equiv \ket{\bar{0}\bar{1}}, \\
\ket{\psi_{0110}}&=\frac{1}{\sqrt{2}} (\ket{0110} + \ket{1001}) \equiv \ket{\bar{1}\bar{1}},
\end{align}
\ees
and since we have $4$ orthonormal states, we can use them as $2$ qubits. Again, this agrees with the fact that $N-2=2$ in this case. 

How do we perform computations on these states? For that we use the commutant of the group,
\bes
\begin{align}
\C \mc{G}' & \cong \bigoplus \mathcal{M}_{n_J} \ox I_{d_J} \\
& = \bigoplus_{J=1}^{4} c_J \mathcal{M}_{2^{N-2}}
\end{align}
\ees
where $c_J$ are scalars; they are the columns of Table~\eqref{eq:irreps-table} (e.g., for $X^{\otimes N}$ we have $c_1=c_2=1$ and $c_3=c_4=-1$). We can check that the commutant can be generated by $\{ X_1 X_{j+1}\}_{j=1}^{N-2} \cup \{ Z_{j+1} Z_N \}_{j=1}^{N-2}$. For example, we don't require $Y_i Y_j$ to be in the generating set, since $Y_i Y_j = X_i X_j Z_i Z_j$. Therefore, for $N=4$, the generating set for the commutant becomes $\{ X_1 X_2, X_1 X_3, Z_2 Z_4, Z_3 Z_4 \}$.

Let's check the action of $X_1 X_2$ on our first logical state $\ket{\psi_{0000}} \equiv \ket{\bar{0}\bar{0}}$,
\bes
\begin{align}
X_1 X_2 \ket{\bar{0}\bar{0}} & = (X_1 \ox X_2 \ox I \ox I) \lp \frac{1}{\sqrt{2}} (\ket{0000} + \ket{1111}) \rp \\
&= \frac{1}{\sqrt{2}} (\ket{0011} + \ket{1100}) \equiv \ket{\bar{1}\bar{0}}.
\end{align}
\ees
We can similarly check that $X_1 X_2 \ket{\bar{0}\bar{1}} = \ket{\bar{1}\bar{1}}$. Therefore $X_1 X_2$ is logical Pauli-$X$ on the first logical qubit. We write this as $X_1 X_2 = \bar{X_1}$, where the bar denotes a logical operator. Similarly we can verify that:
\begin{align}
\label{eq:lpauli}
X_1 X_{j+1} = \bar{X_j}, \\
Z_{j+1} Z_N = \bar{Z_j}.
\end{align}
And having obtained logical-$X$ and logical-$Z$ (note that they anti-commute, as they should), we can implement {any one qubit gate we like} using only two-body interactions. And moreover, these logical gates lie in the commutant of the decoupling group. Therefore, we can compute while at the same time applying DD. 

But, in order to perform universal quantum computation, we also need to be able to perform entangling operations on two qubits, for example the CNOT gate. Let's examine $\bar{X_i} \bar{X_j}$ which is a logical entangling operation :
\bes
\begin{align}
\bar{X_i} \bar{X_j} & = X_1 X_{i+1} X_1 X_{j+1} \\
&= X_{i+1} X_{j+1},
\end{align}
\ees
which is a physical $2$-body interaction. So, we have managed to implement a logical entangling operation using only a $2$-body interaction. 

It is well known that if we can implement any Hamiltonian of the form,
\beq
\label{eq:univ}
H_S = \sum_{i} \o^{X}_i(t) \s^{X}_i + \sum_{i} \o^{Z}_i(t) \s^{Z}_i + \sum_{\a\in\{X,Z\}}\sum_{i,j} J_{ij}(t) \s^{\a}_i \s^{\a}_j,
\eeq
then we can perform universal quantum computation \cite{nielsen2000quantum}.
Therefore, in our case, replacing the Pauli operators in Eq.~\eqref{eq:univ} with their logical counterparts and expanding them in terms of their decompositions \eqref{eq:lpauli} in the physical qubit space, we obtain:
\bes
\bea
\bar{H}_S &=& \sum_{i=1}^{N-2} \bar{\o}^{X}_i(t) \bar{\s}^{X}_{i} + \sum_{i=1}^{N-2} \bar{\o}^{Z}_{i}(t) \bar{\s}^{i}_N + \sum_{\a\in\{X,Z\}}\sum_{i,j} \bar{J}_{ij}(t) \bar{\s}^{\a}_{i} \bar{\s}^{\a}_{j}\\
&=& \sum_{i=1}^{N-2} \o^{X}_{i+1}(t) \s_1^x\s^x_{i+1}\ + \sum_{i=1}^{N-2} {\o}^{Z}_{i+1}(t) {\s}^{z}_{i+1}\s^z_N + \sum_{\a\in\{X,Z\}}\sum_{i,j} J_{i+1,j+1}(t) \bar{\s}^{\a}_{i+1} \bar{\s}^{\a}_{j+1}
\eea
\ees
Remarkably,  we have obtained \textit{decoupling and universal quantum computation} on the logical qubits using only two-body interactions on the physical qubits. A similar approach has been studied numerically in the context of the $4$-qubit DFS code, with gates protected by CDD, showing evidence for a highly robust set of universal gates \cite{West:10}. Related ideas apply in the context of adiabatic quantum computation 
\cite{PhysRevLett.100.160506}.

\section{Conclusions}
\label{sec:conc}

This review has covered a selection of topics in the theory of decoherence-free subspaces, noiseless subsystems, and dynamical decoupling. We have seen how these tools allow one to hide information from the environment, and when this hiding is imperfect, how dynamical decoupling allows us to suppress the remaining residual decoherence. Moreover, we have shown explicitly how universal quantum computation is compatible with decoherence avoidance and suppression.   

Many important topics were left out in this brief review. For example, apart from CDD we did not address high-order decoupling methods, in particular schemes based on 
optimized pulse intervals \cite{Uhrig:2007:100504,West:2010:130501,Wang:2011:022306,Xia:2011uq}. Nor did we address the filter function approach to DD \cite{Uys:2009:040501}, optimized continuous modulation \cite{Gordon:2008:010403,Clausen:2010:040401}, or randomized decoupling, which is well suited to strongly time-dependent baths \cite{Santos:2006:150501}. It is important to stress that beyond decoherence avoidance and suppression, the theory of noiseless subsystems gave rise also to important advances in the theory of quantum error correcting codes, such as operator quantum error correction \cite{Kribs:2005:180501}. However, perhaps our greatest omission has been the abundance of experimental results which have both confirmed and driven the theoretical developments described here. For an entry into that literature, as well as many additional theoretical topics, see Ref.~\cite{Lidar-Brun:book}.
Nevertheless, hopefully we have given the reader the tools and inspiration to delve deeper into the large and fascinating literature on decoherence avoidance and suppression.

\acknowledgments
I am grateful to all the students in my 
Spring 2007 and 
Fall 2011 Quantum Error Correction 
courses, 
in particular Jose Raul Gonzalez Alonso, Chris Cantwell, Kung-Chuan Hsu, Siddharth Muthu Krishnan, Ching-Yi Lai, Hokiat Lim, 
Osonde Osoba, 
Kristen Pudenz, Greg Quiroz, 
Bilal Shaw, Mark Wilde, 
Sunil Yeshwanth,  and Yicong Zheng, for their meticulous notes, which formed the basis for this review. 
Thanks also to Gerardo Paz who helped with the section on CDD. I'm indebted to Steve Huntsman who found and helped correct many typos.
This work was
supported by the NSF Center for Quantum Information and
Computation for Chemistry, Grant No. CHE-1037992, NSF Grant No. CHE-924318, and by the ARO MURI Grant No. W911NF-11-1-0268. 


%

\end{document}